\newcommand{\bea}{\begin{eqnarray}}
\newcommand{\eea}{\end{eqnarray}}
\newcommand{\ba}{\begin{eqnarray}}
\newcommand{\ea}{\end{eqnarray}}
\newcommand{\beq}{\begin{equation}}
\newcommand{\eeq}{\end{equation} }
\newcommand{\beqa}{\begin{eqnarray}}
\newcommand{\eeqa}{\end{eqnarray}}
\newcommand{\beqar}{\begin{eqnarray*}}
\newcommand{\eeqar}{\end{eqnarray*}}
\newcommand{\be}{\begin{equation}}
\newcommand{\ee}{\end{equation}}
\newcommand{\diff}{\mathrm{d}}
\newtheorem{theorem}{Theorem}
\newtheorem{defi}{Definition}
\newtheorem{proposition}{Proposition}
\newtheorem{example}{Example}
\newcommand{\E}{\mathcal{E}}
\newcommand{\I}{\mathcal{I}}
\newcommand{\sZ}{\mathsf{Z}}
\newcommand{\sW}{\mathsf{W}}
\newcommand{\sX}{\mathsf{X}}
\newcommand{\sY}{\mathsf{Y}}
\newcommand{\cL}{\mathcal{L}}
\newcommand{\ex}[1]{\text{e}^{#1}} 
\newcommand{\iu}{\text{i}}
\definecolor{shadecolor}{rgb}{.25,.25,.25}
\title{ \boldmath Cosmological higher-curvature gravities}
\author[a]{Javier Moreno,}
\author[b]{\'Angel J. Murcia}
\affiliation[a]{Departamento de Física, Universidad de Concepción, Casilla, 160-C, Concepción, Chile.
 \vspace{0.1cm}}
\affiliation[b]{INFN, Sezione di Padova, Via Francesco Marzolo 8, 35131 Padova, Repubblica Italiana \vspace{0.1cm}}
\emailAdd{jmoreno@campus.haifa.ac.il}
\emailAdd{angel.murcia@pd.infn.it}
\date{\today}
\abstract{We examine higher-curvature gravities whose FLRW configurations are specified by equations of motion which are of second order in derivatives, just like in Einstein gravity. We name these theories \emph{Cosmological Gravities} and initiate a systematic exploration in dimensions $D \geq 3$. First, we derive an instance of Cosmological Gravity to all curvature orders and dimensions $D \geq 3$. Second, we study Cosmological Gravities admitting non-hairy generalizations of the Schwarzschild solution characterized by a single function whose equation of motion is, at most, of second order in derivatives. We present explicit instances of such theories for all curvature orders and dimensions $D \geq 4$. Finally, we investigate the equations of motion for cosmological perturbations in the context of generic Cosmological Gravities. Remarkably, we find that the linearized equations of motion for scalar cosmological perturbations in any Cosmological Gravity in $D\geq 3$ contain no more than two time derivatives. We explicitly corroborate this aspect by presenting the equations for the scalar perturbations in some four-dimensional Cosmological Gravities up to fifth order in the curvature.
 }
\begin{document} 
\maketitle
\flushbottom


\section{Introduction}
\label{sec:Introduction}

The quest for the theory of Quantum Gravity remains as one of the most puzzling, challenging and fundamental problems of Physics. It has been a vivid and central theme for both the gravitational and theoretical high-energy physics communities over the last decades, in which some potential candidates for  Quantum Gravity have been put forward and developed, String Theory being the most prominent one. At the level of low-energy effective actions, it corrects the Einstein-Hilbert Lagrangian through the introduction of specific higher-order terms in the spacetime curvature \cite{Callan:1985ia,Zwiebach:1985uq,Grisaru:1986vi,Gross:1986iv,Gross:1986mw,Bergshoeff:1989de}.  This way, the study of quantum corrections to General Relativity (GR) amounts to the exploration of \emph{higher-order gravities} (equivalently, higher-curvature or higher-derivative gravities). 

As a matter of fact, in recent years there has been an ever-growing interest in higher-order gravities by themselves, regardless of their possible fundamental origin. Indeed, higher-curvature terms arise naturally in gravitational effective actions, in which one is required to include all terms compatible with diffeomorphism-invariance \cite{Donoghue:1994dn,Weinberg:1995mt}. Further motivations for their investigation include the fact that higher-derivative terms usually give rise to renormalizable gravitational actions \cite{Stelle:1976gc, Stelle:1977ry} --- so that they capture the relevant physics when the spacetime curvature is sufficiently large (such as in black-hole physics \cite{Wheeler:1985nh,Boulware:1985wk,Wiltshire:1988uq,Myers:2010ru,Lu:2015cqa,Bueno:2016lrh,Bueno:2017sui,Aguilar-Gutierrez:2023kfn}) ---, their numerous holographic applications \cite{Brigante:2007nu,Hofman:2008ar,Myers:2008yi,Cai:2008ph,deBoer:2009pn,Camanho:2009vw,Buchel:2009sk,deBoer:2009gx,Myers:2010jv,Perlmutter:2013gua,Mezei:2014zla,Bueno:2015rda,Bueno:2015xda,Chu:2016tps,Dey:2016pei,Bueno:2018xqc,Li:2018drw,Bueno:2018yzo,Bueno:2020uxs,Bueno:2020odt,Anastasiou:2021swo,Anastasiou:2021jcv,Cano:2022ord,Bueno:2022jbl,Murcia:2023zok} or their capability to parametrize potential deviations from Einstein gravity in astrophysical observables \cite{Cardoso:2009pk,Blazquez-Salcedo:2017txk,Berti:2018cxi,Cardoso:2018ptl,LIGOScientific:2018mvr,LIGOScientific:2020ibl,Cano:2021myl,LIGOScientific:2021djp,Silva:2022srr,Cano:2023jbk,Cayuso:2023xbc}. 

Another realm in which higher-curvature gravities have proven to be extremely useful is that of cosmology. Such an insight is due to Starobinsky \cite{Starobinsky:1980te,Starobinsky:1983zz}, who realized that adding a Ricci-scalar squared term $R^2$ into the Einstein-Hilbert action provides a natural explanation for cosmic inflation, which furthermore fits very well to current observations \cite{Planck:2018jri}. This has motivated the exploration of more generic higher-curvature gravities for cosmological purposes, placing special emphasis on the investigation of $f(R)$ theories \cite{Li:2007xn,Sotiriou:2008rp,Nojiri:2010wj,Clifton:2011jh,Nojiri:2017ncd,Cheong:2020rao,Ivanov:2021chn,Odintsov:2023weg}. Their choice corresponds to the fact that they are classically equivalent to scalar-tensor theories with equations of motion of second order in derivatives \cite{Barrow:1988xh}. Nonetheless, it is clear that these theories do not represent the most general effective theories of gravity, which would feature higher-curvature terms with explicit Riemann curvature and Ricci curvature tensors. 

However, given the vast number of higher-curvature gravities existing at each curvature order, it is more convenient to initiate the (cosmological) exploration of such more generic theories with the study of certain higher-curvature gravities --- including Riemann curvature and Ricci curvature tensors --- possessing some special features, such as having second-order equations of motion for Friedmann-Lemaître–Robertson–Walker (FLRW) configurations. There are several reasons to  impose such a condition. First, because it dramatically simplifies the study of cosmological solutions with higher-curvature corrections. Second, since an analogous requirement in the context of spherically symmetric solutions has turned out to be greatly successful \cite{Oliva:2010eb,Myers:2010ru,Dehghani:2011vu,Bueno:2016xff,Hennigar:2017ego,Bueno:2019ltp,Bueno:2019ycr,Bueno:2022res,Aguilar-Gutierrez:2023kfn}. And third, because theories with second-order equations of motion on cosmological backgrounds automatically provide instances of holographic theories satisfying a holographic $c$-theorem \cite{Freedman:1999gp,Myers:2010xs,Myers:2010tj,Bueno:2022lhf,Bueno:2022log}. 

Interestingly enough, examples of higher-curvature theories in four spacetime dimensions with second-order equations for FLRW backgrounds have been constructed at cubic order in the curvature \cite{Arciniega:2018fxj}, quartic (and quintic) order \cite{Cisterna:2018tgx} and up to eighth order in the curvature \cite{Arciniega:2018tnn}. Among other things, it was found that the aforementioned cubic theory yields equations of motion for scalar perturbations on top of FLRW backgrounds which are of second order in time derivatives \cite{Cisterna:2018tgx} and that  they provide a mechanism for cosmic inflation devoid of additional fields to the metric \cite{Arciniega:2018fxj,Arciniega:2018tnn,Edelstein:2020nhg,Jaime:2021pqn,Jaime:2022cho}. For other aspects of these theories, we refer the reader to \cite{Erices:2019mkd,Arciniega:2020pcy,Quiros:2020eim,Edelstein:2020lgv,Cano:2020oaa,Asimakis:2022mbe} and references therein. Also, in the context of theories fulfilling a holographic $c$-theorem, higher-order gravities with second-order equations on top of FLRW backgrounds were derived at any curvature order in $D=3$ \cite{Bueno:2022lhf}, while for generic dimensions a recursive algorithm to find Cosmological Gravities at all curvature orders (although with explicit covariant derivatives of the curvature) was provided in Reference \cite{Bueno:2022log}.


However, there still remain many open questions regarding theories with second-order equations for the scale factor in FLRW configurations. For example, is it possible to obtain an explicit example of a theory of this class at all curvature orders (and without covariant derivatives) in four dimensions? And in generic dimensions $D \geq 4$? Could it be feasible to derive an instance of a theory at any curvature order and dimensions $D \geq 4$ that possesses second-order equations of motion for both FLRW backgrounds and static and spherically symmetric configurations? And regarding cosmological perturbations, could it be possible that ensuring second-order equations of motion for FLRW backgrounds forces the equations for scalar perturbations to be of second order in time derivatives?

In the present manuscript, we address and answer all previous questions in the affirmative. Specifically, if we coin the name \emph{Cosmological Gravity} to refer to any (higher-curvature) theory of gravity with second-order equations on top of FLRW configurations, we have managed to derive the results listed below.
\subsection{Summary of results}
\begin{enumerate}
\item Section \ref{sec:hogflrw} provides some preparatory material regarding the properties of curvature invariants when evaluated on top of FLRW backgrounds. More concretely, it is proven that the equations of motions for FLRW backgrounds in generic higher-curvature gravities may be obtained through the variation of the Lagrangian evaluated on such ans\"atze and that all curvature invariants on these configurations may be expressed as polynomials of two independent functions.
\item Section \ref{sec:cosmograv} introduces the class of Cosmological Gravities and presents an extremely simple condition (see Proposition \ref{prop:cosmocond}) to determine whether a theory is a Cosmological Gravity. This condition is further used to derive, for the first time, an instance of a Cosmological Gravity at every curvature orders and for all spacetime dimensions\footnote{Clearly, one would be primarily interested in $D=4$. However, the generalization for arbitrary $D$ is straightforward and could be of utility in the context of stringy effective actions and/or holography.} $D \geq 3$, see Theorem \ref{theo:cosmogenteorias}.
\item Section \ref{sec:CGQG} focuses on the interplay between Cosmological Gravities and the class of Generalized Quasitopological Gravities, characterized by possessing non-hairy spherically symmetric black holes fully specified by one single function. On the one hand, for $D \geq 5$ we find an example of a Cosmological Gravity also belonging to the restricted class of Quasitopological Gravities (in which the equation for the function characterizing the spherically symmetric solution is algebraic) at all curvature orders. Then, on the other hand, in $D=4$ we are able to construct an instance of a \emph{Cosmological Generalized Quasitopological Gravity} at all curvature orders.
\item Section \ref{sec:cosmopert} is devoted to the study of cosmological perturbations in Cosmological Gravities. More concretely, scalar perturbations are considered, proving that all Cosmological Gravities in $D \geq 3$ possess equations of motion for the scalar perturbations on top of FLRW backgrounds which are of second order in time derivatives (although featuring higher-order spatial derivatives). We also study the cosmological scalar perturbations for the four-dimensional Cosmological Generalized Quasitopological Gravities derived in Section \ref{sec:CGQG} and show the associated equations of motion for the perturbations up to order $n=5$.

\item Finally, in Section \ref{sec:conclu} we present our conclusions and discuss possible future directions.
\end{enumerate}









\subsubsection*{Note on conventions}

We will be using the spacelike (i.e., mostly-plus) signature for the metric $g_{ab}$, as well as the conventions of Reference \cite{Wald:1984rg} for the spacetime curvature. Specifically, the Riemann curvature tensor $R_{abc}{}^d$ reads: 
\begin{equation}
R_{abc}{}^{d}=-2\partial_{[a} \Gamma_{b]c}{}^{d}+ 2\Gamma_{c[a}{}^{e} \Gamma_{b]e}{}^{d}\,,
\end{equation}
where $\Gamma_{ab}{}^{c}$ denotes the Christoffel symbols and where $[ab]$ stands for antisymmetrization of indices, including the corresponding weights. The Ricci curvature tensor $R_{ab}$ and the Ricci scalar $R$ are defined as follows:
\begin{equation}\label{eq:RicS}
R_{ab}=R_{adb}{}^d\, , \qquad R=R_{b}{}^{b}\,. 
\end{equation}
In another vein, the Weyl tensor $W_{abc}{}^d$ and the traceless Ricci tensor $Z_{ab}$ of a $D$-dimensional Lorentzian manifold are given by:
\begin{align}
W_{abcd}&=R_{abcd}+\frac{2}{D-2}(g_{b[c} R_{d]a}-g_{a[c} R_{d]b})+\frac{2}{(D-1)(D-2)} R g_{c[a}g_{b]d}\,\label{eq:Weyl},\\
Z_{ac}&=R_{ac}-\frac{1}{D}g_{ac} R\,.\label{eq:TraclessR}
\end{align}

\section{Higher-order gravities and their structure on FLRW configurations}\label{sec:hogflrw}

Consider the most general diffeomorphism-invariant theory of gravity $\mathcal{L}(g^{ab},R_{abcd})$ which is built out from arbitrary contractions\footnote{We will not consider parity-breaking terms or terms with covariant derivatives of the curvature.} of the Riemann curvature tensor $R_{abcd}$ associated to a metric $g_{ab}$. We will assume that it reduces to GR at sufficiently low energies and that it admits an effective expansion in powers of the curvature, so that it can be expressed as follows:
\begin{equation}
\mathcal{L}(g^{ab},R_{abcd})=-2\Lambda+ R + \sum_{n=2}^\infty \sum_{m=1}^{g_n} \ell^{2(n-1)} \alpha_{n,m} \mathcal{R}_{(n,m)}\,,
\label{eq:hogans}
\end{equation}
where $\Lambda$ stands for the cosmological constant, $\ell$ denotes the typical length scale from which on the effects of higher-curvature terms become relevant (e.g., the Planck scale), $\alpha_{n,m}$ are dimensionless couplings and the terms $\mathcal{R}_{(n,m)}$ represent the $g_n$ different (linearly independent) curvature invariants which can be constructed at order $2n$ in derivatives\footnote{To clarify this nomenclature, observe that each Riemann curvature tensor introduces two derivatives of the metric, so a term of order $2n$ in derivatives would be formed by $n$ curvature tensors.}. In particular, the Lagrangian \eqref{eq:hogans} takes the form of a higher-order (or higher-curvature) gravity, which generically possesses infinitely-many terms with an increasing number of derivatives, but which could also be truncated at a certain order. 

We shall be interested in studying $D$-dimensional ($D \geq 3$) cosmological configurations corresponding to the following FLRW ansatz for the metric:
\begin{equation}
\diff s_{a,F}^2=- F(t) \diff t^2+ \frac{a^2(t)}{1-k r^2}\diff r^2+  a^2(t) r^2 \mathrm{d} \Omega^2_{D-2}\,,
\label{eq:flrwans}
\end{equation}
where $a(t)$ stands for the scale factor and $k=-1,0,1$ controls the topology of the spatial sections (hyperbolic, planar or spherical). Two remarks should be done at this point. On the one hand, we allow the spacetime dimension $D$ to be an arbitrary integer larger than two. Indeed, higher-curvature gravities in dimensions larger than four appear naturally in the context of String Theory (see e.g. References \cite{Callan:1985ia,Zwiebach:1985uq,Gubser:1998nz,Bergshoeff:1989de}) and within the exploration of holographic Conformal Field Theories (CFTs) in generic dimensions (for instance, check References \cite{Brigante:2007nu,deBoer:2009pn,Camanho:2009vw,Buchel:2009sk,Myers:2010jv,Myers:2010xs,Myers:2010tj,Li:2018drw,Bueno:2022lhf,Bueno:2022log}). On the other hand, observe that we have included a time-dependent function $F(t)$, which could be reabsorbed via a time-reparametrization. The main motivation to keep $F(t)$ explicit is that it will be highly useful for the derivation of the subsequent generalized Friedmann equations (the equations of motion of $\mathcal{L}(g^{ab},R_{abcd})$ on FLRW configurations \eqref{eq:flrwans}), as we will momentarily show. For further reference, let us also present corresponding FLRW metric with $F(t)=1$, which we will indistinctly denote as $\diff s_{a,1}^2$, or $\diff s_{a}^2$:
\begin{equation}
\diff s_{a}^2=- \diff t^2+ \frac{a^2(t)}{1-k r^2}\diff r^2+ a^2(t)  r^2 \mathrm{d} \Omega^2_{D-2}\,,
\label{eq:flrwansf1}
\end{equation}
If necessary, we will refer to $\diff s_{a}^2$ as the comoving FLRW ansatz. Analogously, we will denote by $L_{a,F}= \mathcal{L} \vert_{a,F}$ the reduced Lagrangian obtained by evaluation of $\mathcal{L}(g^{ab},R_{abcd})$ on the FLRW ansatz \eqref{eq:flrwans}.
\begin{proposition}
\label{prop:eomflrw}
Let $\mathcal{L}(g^{ab},R_{abcd})$ be any (higher-order) theory of gravity\footnote{For this proposition, we could also consider covariant derivatives of the curvature.}. The equations of motion of $\mathcal{L}(g^{ab},R_{abcd})$ for FLRW configurations \eqref{eq:flrwans} are given by the Euler-Lagrange variation of $\hat{L}_{a,F}=\left. \left (\sqrt{\vert g \vert}\right ) \right \vert_{a,F} L_{a,F}$ with respect to $a$ and $F$:
\begin{align}
 \frac{\delta \hat{L}_{a,F}}{\delta F} &=\frac{\partial \hat{L}_{a,F}}{\partial F}-  \frac{\diff}{\diff t} \left(\frac{\partial \hat{L}_{a,F}}{\partial \dot{F}} \right)  =0\, , \\ \frac{\delta \hat{L}_{a,F}}{\delta a}  &=\frac{\partial \hat{L}_{a,F}}{\partial a}- \frac{\diff}{\diff t} \left( \frac{\partial \hat{L}_{a,F}}{\partial \dot{a}} \right)+\frac{\diff^2}{\diff t^2}\left( \frac{\partial \hat{L}_{a,F}}{\partial \ddot{a}}\right)=0\,,
\end{align}
where\footnote{We have not included the derivative of $\hat{L}_{a,F}$ with respect to $\ddot{F}$ since it does not appear in any curvature invariant, as we will see afterwards. } $\mathrm{dot}$ stands for time derivative. After the functional variation, one can reparametrize time to set $F=1$.
\end{proposition}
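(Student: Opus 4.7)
The plan is to invoke a principle-of-symmetric-criticality argument tailored to the FLRW ansatz with an explicit lapse $F(t)$. Writing the full action as $S=\int \diff^D x\,\sqrt{\vert g\vert}\,\mathcal{L}$, its metric variation reads $\delta S=\int \diff^D x\,\sqrt{\vert g\vert}\,\mathcal{E}^{ab}\delta g_{ab}$ for some Euler--Lagrange tensor $\mathcal{E}^{ab}$. The objective is to show that $\mathcal{E}^{ab}|_{a,F}=0$ is equivalent to the pair of scalar equations obtained by varying $\hat{L}_{a,F}$ with respect to $a$ and $F$, with $\ddot F$ absent by construction.

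The first step is to exploit the spatial isometries of \eqref{eq:flrwans}. Since the FLRW slices are homogeneous and isotropic (invariant under the isometry group of the constant-curvature spatial geometry) and $\mathcal{E}^{ab}$ transforms covariantly, the restriction $\mathcal{E}^{ab}|_{a,F}$ must inherit these symmetries. Consequently, its only independent scalar components are a purely timelike piece $e_F(t)$ and a spatial-trace piece $e_a(t)$, namely $\mathcal{E}^{ab}|_{a,F}=e_F(t)\,u^a u^b+e_a(t)\,h^{ab}$, with $u^a$ the unit timelike vector and $h^{ab}$ the spatial inverse metric. Hence the full tensorial EOM on FLRW collapses into the two scalar conditions $e_F=e_a=0$, regardless of the higher-curvature content of $\mathcal{L}$.

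The second step is to identify $e_F$ and $e_a$ with the Euler--Lagrange derivatives of $\hat{L}_{a,F}$. Substituting \eqref{eq:flrwans} into $S$ yields $S|_{a,F}=V_{\text{sp}}\int \diff t\,\hat{L}_{a,F}$, with $V_{\text{sp}}$ the (formal) spatial volume and $\hat{L}_{a,F}$ depending on $a,\dot a,\ddot a,F,\dot F$ but not on $\ddot F$; the absence of $\ddot F$ is a direct computation, as any $\ddot F$ term generated by $\partial_t\Gamma^{t}{}_{tt}$ inside a Riemann component is cancelled by an identical contribution from $\partial_t\Gamma^{a}{}_{ta}$, and the same cancellation propagates to every curvature invariant. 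Taking compactly supported variations $\delta a(t),\delta F(t)$ and writing $\delta g_{ab}=(\partial g_{ab}/\partial F)\delta F+(\partial g_{ab}/\partial a)\delta a$, direct computation gives
\begin{equation}
V_{\text{sp}}\!\int\!\diff t\left[\frac{\delta \hat{L}_{a,F}}{\delta F}\delta F+\frac{\delta \hat{L}_{a,F}}{\delta a}\delta a\right]=\int \diff^D x\sqrt{\vert g\vert}\,\mathcal{E}^{ab}\big|_{a,F}\!\left(\frac{\partial g_{ab}}{\partial F}\delta F+\frac{\partial g_{ab}}{\partial a}\delta a\right),
\end{equation}
and the right-hand side reduces, using the decomposition above together with $\partial g_{tt}/\partial F=-1$ and $\partial g_{ij}/\partial a\neq 0$, to an explicit linear combination of $e_F\delta F$ and $e_a\delta a$ with nonvanishing coefficients. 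Since $\delta a$ and $\delta F$ are independent, the two Euler--Lagrange equations of $\hat{L}_{a,F}$ are equivalent to $e_F=e_a=0$, and thus to $\mathcal{E}^{ab}|_{a,F}=0$. Once the variation has been performed, the residual time-reparametrization invariance permits the gauge choice $F=1$ a posteriori.

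The main obstacle is the symmetric-criticality step itself: restricting variations to a symmetric sector does not, in general, reproduce the full set of equations of motion. Here it does precisely because the lapse $F(t)$ has been retained, so that the two independent variations probe both scalar components of $\mathcal{E}^{ab}$ allowed by the FLRW symmetries; in particular, $\delta F$ yields the Hamiltonian-type Friedmann constraint $e_F=0$, which would otherwise be invisible in the gauge-fixed ansatz \eqref{eq:flrwansf1}. A secondary technical point is the careful handling of the higher-derivative dependence on $\ddot a$, but no boundary ambiguities arise since $\hat{L}_{a,F}$ is polynomial in $\ddot a$ and independent of $\ddot F$, so the integrations by parts that produce the Euler--Lagrange expressions terminate at the order displayed in the proposition.
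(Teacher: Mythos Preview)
Your argument is correct and follows essentially the same route as the paper: both use the FLRW isometries to reduce $\mathcal{E}_{ab}|_{a,F}$ to two independent scalar components, and then the chain rule $\delta\hat L_{a,F}/\delta F=\sqrt{|g|}\,\mathcal{E}_{ab}\,\partial g^{ab}/\partial F$ (and similarly for $a$) to identify those components with the reduced Euler--Lagrange equations. The only cosmetic difference is that the paper writes the decomposition in coordinate form $\mathcal{E}_{ab}=(s_1+Fs_2)\delta_a^t\delta_b^t+s_2 g_{ab}$ and carries the explicit Jacobian factors, whereas you package the same content in the $u^a u^b+h^{ab}$ language of symmetric criticality.
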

\begin{proof}
Let us consider an arbitrary metric $g_{ab}$ and some coordinates $(t,r,x^{i})$, with $i=1,2,...,D-2$. Write $g_{tt}=-F(t)$, $g_{rr}=a^2(t)/(1-kr^2)$ and $g_{ij}=a^2(t) r^2 h_{ij}$, where the indices $i,j$ refer the $(D-2)$ coordinates transverse to the $(t,r)$ coordinates. Observe that:
\begin{align}
\label{eq:eomgennocos1}
 \frac{\delta \hat{L}_{a,F}}{\delta F}&=\sqrt{\vert g \vert}\mathcal{E}_{ab} \frac{\partial g^{ab}}{\partial F}=\sqrt{\frac{F}{1-k r^2}} a^{(D-1)}\frac{1}{F^2} \mathcal{E}_{tt}\,,\\
 \label{eq:eomgennocos2}
 \frac{\delta \hat{L}_{a,F}}{\delta a}&=\sqrt{\vert g \vert} \mathcal{E}_{cd} \frac{\partial g^{cd}}{\partial a}=\sqrt{\frac{F}{1-k r^2}} a^{(D-1)}\left(-\frac{2(1-kr^2)}{ a^3} \mathcal{E}_{rr}-\frac{2}{r^2 a^3} h^{ij} \mathcal{E}_{ij}\right) \,,
\end{align}
where $\mathcal{E}_{ab}$ denotes the generic gravitational field equations of the theory $\mathcal{L}(g^{ab},R_{abcd})$ (not evaluated on any particular configuration). If we now evaluate $\mathcal{E}_{ab}$ on the FLRW ansatz \eqref{eq:flrwans}, so that $t$ becomes naturally a time coordinate, $r$ represents a radial variable and $h_{ij}$ is simply the metric of the $(D-2)$-dimensional round sphere, then the symmetries of such an FLRW spacetime fix $\mathcal{E}_{ab}$ to be of the form:
\begin{equation}
\mathcal{E}_{ab}=(s_1(t)+F(t) s_2(t)) \delta_a^t \delta_b^t+s_2(t) g_{ab}\,,
\label{eq:geneomcos}
\end{equation}
where $g_{ab}$ is the FLRW metric given in Equation \eqref{eq:flrwans}. This implies that the whole system of Einstein equations reduces to a system of two coupled ordinary differential equations. At this point, evaluation of Equations \eqref{eq:eomgennocos1} and \eqref{eq:eomgennocos2} on the background \eqref{eq:flrwansf1} and subsequent use of Equation \eqref{eq:geneomcos} reveals that:
\begin{align}
\label{eq:eomgensicos1}
\left. \frac{\delta \hat{L}_{a,F}}{\delta F} \right \vert_{\diff s^2_{a,F}}&=\sqrt{\frac{F}{1-k r^2}} a^{(D-1)}\frac{1}{F^2} s_1(t)\,,\\
\left. \frac{\delta \hat{L}_{a,F}}{\delta a} \right \vert_{\diff s^2_{a,F}}&=-2\sqrt{\frac{F}{1-k r^2}} a^{(D-4)} (D-1)s_2(t)\,.
\label{eq:eomgensicos2}
\end{align}
Since $s_1(t)$ and $s_2(t)$ are precisely the unique independent components of the gravitational equations of motion, we conclude.
\end{proof}

Proposition \ref{prop:eomflrw} provides us with a very convenient way to compute the equations of motion of any theory for (comoving) FLRW ans\"atze \eqref{eq:flrwansf1}. In particular, one just needs to evaluate the corresponding higher-curvature Lagrangian $\mathcal{L}(g^{ab},R_{abcd})$ on the configuration \eqref{eq:flrwans} and carry out the functional derivatives of the resulting reduced Lagrangian with respect to $a$ and $F$ (and setting afterwards $F(t)=1$ if desired). This approach corresponds to the so-called principle of symmetric criticality \cite{Palais:1979rca}, which may be considered as a classical counterpart of the celebrated mini-superspace formalism in the context of Loop Quantum Gravity \cite{DeWitt:1967yk,Dewitt:1968lxx}. 

It would be extremely convenient for the study of cosmological solutions of higher-order gravities to have a better understanding on the structure of higher-curvature terms when evaluated on the FLRW ansatz \eqref{eq:flrwans}. To this aim, it is appropriate to express all curvature invariants in terms of  Weyl tensors $W_{abcd}$ \eqref{eq:Weyl}, traceless Ricci tensors $Z_{ab}$ \eqref{eq:TraclessR} and Ricci scalars $R$ \eqref{eq:RicS}. This is clearly useful, since metrics of the form given at Equation \eqref{eq:flrwans} are conformally flat:
\begin{equation}
\left. W_{abcd}  \right \vert_{\diff s^2_{a,F}}=0\,.
\end{equation}
Hence any higher-curvature term containing Weyl tensors will exactly vanish on FLRW configurations, so we may forget about them for the computation of the subsequent equations of motion on such backgrounds. We need, thus, to concentrate on those terms made up with traceless Ricci tensors and Ricci scalars. For that, we observe that:
\begin{equation}
\label{eq:gammadef}
\left.Z_a^b\right\vert_{\diff s^2_{a,F}}=\Gamma(-D \delta_a^t \delta_t^b+\delta_a^b)\,, \quad \Gamma=\frac{(D-2)(2k F^2+2 F(\dot{a})^2+a \dot{a} \dot{F}-2a \ddot{a} F)}{2 a^2 F^2 D}\,,
\end{equation}
where we remind that \emph{dot} stands for time derivative (with respect to comoving time). This means that all contractions of traceless Ricci tensors will be proportional to an appropriate power of $\Gamma$. Define:
\begin{equation}
\label{eq:sigmadef}
\Sigma=R\vert_{\diff s^2_{a,F}}=\frac{(D-1)\left((D-2)k F^2+(D-2)F(\dot{a})^2-a\dot{a} \dot{F}+2a\ddot{a}F\right)}{a^2F^2}\, .
\end{equation}
The previous argumentation shows the following proposition.
\begin{proposition}\label{prop:noWeyl}
All non-zero curvature invariants built out from $p$ Weyl tensors, $q$ traceless Ricci tensors and $m$ Ricci scalars are proportional to each other when evaluated on the FLRW ansatz \eqref{eq:flrwans}. More concretely, if $(R^m W^p Z^q)_i$ is any possible way of contracting all such tensors, then
\begin{equation}
\label{eq:1propflrw}
(R^m W^p Z^q)_i\vert_{a,F}=0\,,  \quad p \geq 1 \quad \text{and} \quad \left.(R^m Z^p)_i\right\vert_{a,F}=c_i \Sigma^m \Gamma^p\,, \quad c_i \in \mathbb{R}\, ,
\end{equation}
\end{proposition}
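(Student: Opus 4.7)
The plan is to establish the two assertions of \eqref{eq:1propflrw} separately — vanishing when $p\geq 1$, and factorization into $\Sigma^{m}\Gamma^{q}$ when $p=0$ — by exploiting the rigid algebraic form that the Ricci ingredients adopt on the FLRW background.

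For the vanishing part I would simply invoke the conformal flatness of the FLRW geometry \eqref{eq:flrwans}: on any conformally flat spacetime the Weyl tensor vanishes identically, so $\left.W_{abcd}\right|_{a,F}\equiv 0$, as already remarked in the text. A scalar invariant containing at least one explicit factor of $W_{abcd}$ is then killed at the point of evaluation, regardless of how it is contracted against the remaining $R$'s and $Z$'s, which settles the $p\geq 1$ statement.

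For the factorization part I would use $\left.Z_{a}{}^{b}\right|_{a,F}=\Gamma\,P_{a}{}^{b}$, an immediate consequence of \eqref{eq:gammadef}, with $P_{a}{}^{b}:=-D\,\delta_{a}^{t}\delta_{t}^{b}+\delta_{a}^{b}$ a purely numerical tensor, independent of both $a(t)$ and $F(t)$. Any complete scalar contraction of $q$ copies of $Z_{ab}$ — which is symmetric and thus freely permutes its two index slots — corresponds to a perfect matching of the $2q$ slots, i.e.\ to a $2$-regular graph whose vertices are the $Z$'s; this graph decomposes into disjoint cycles, a cycle of length $q_{j}$ being exactly the scalar $\tr(Z^{q_{j}})$. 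On the FLRW background each such factor becomes $\Gamma^{q_{j}}\,\tr(P^{q_{j}})$, a pure power of $\Gamma$ multiplied by the numerical constant $(-1)^{q_{j}}(D-1)^{q_{j}}+(D-1)$. Together with $\left.R^{m}\right|_{a,F}=\Sigma^{m}$, this delivers $c_{i}\,\Sigma^{m}\Gamma^{q}$ with $c_{i}\in\mathbb{R}$ depending only on the contraction pattern.

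The step that demands the most care is the combinatorial observation that every scalar contraction of symmetric $(0,2)$-tensors reorganizes, through the metric, into a product of traces of matrix powers of the associated $(1,1)$-tensor. I would justify this via the graph-theoretic picture just sketched — self-loops (which would produce $\tr(Z)=0$ and annihilate the invariant) are allowed but innocuous, and it is irrelevant whether the original $Z$'s appear with their indices upstairs or downstairs, since the metric can always be used to bring each $Z$ into the mixed form $Z_{a}{}^{b}$ before exploiting $Z_{a}{}^{b}=\Gamma P_{a}{}^{b}$. Once this structural observation is recorded, the proposition follows without further computation.
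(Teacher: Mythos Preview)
Your proposal is correct and follows essentially the same route as the paper: the paper's argument, given in the text immediately preceding the proposition, is precisely (i) conformal flatness kills all Weyl factors, and (ii) $\left.Z_a{}^{b}\right|_{a,F}=\Gamma\,(-D\delta_a^t\delta_t^b+\delta_a^b)$ with the bracketed tensor purely numerical, so every full contraction of $q$ copies of $Z$ yields a number times $\Gamma^{q}$. Your cycle-decomposition discussion is a more explicit justification of the paper's one-line remark ``all contractions of traceless Ricci tensors will be proportional to an appropriate power of $\Gamma$,'' but the underlying idea is identical.
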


This latter proposition implies that any linear combination of invariants of $n$-th order in the curvature, when evaluated on the FLRW ansatz \eqref{eq:flrwans}, takes the form:
\begin{equation}
\left. \mathcal{R}_{(n)}\right \vert_{ \diff s^2_{a,F}}=\sum_{i=1}^p \beta_i \Sigma^{b_1^{(i)}} \Gamma^{b_2^{(i)}}\,,
\label{eq:cosmoanyinv}
\end{equation}
for some coefficients $\beta_i$ and non-negative integers $b_1^{(i)}$, $b_2^{(i)}$ (such that $b_1^{(i)}+b_2^{(i)}=n$) and $p$. As far as one is interested in FLRW configurations, it turns out that one can always exchange any curvature invariant $\mathcal{R}_{(n)}$ by a specific  combination of curvature invariants which, evaluated on an FLRW ansatz, is equal to $\left. \mathcal{R}_{(n)}\right \vert_{ \diff s^2_{a,F}}$. Indeed, since $\left. \mathcal{R}_{(n)}\right \vert_{ \diff s^2_{a,F}}$ is a polynomial of $\Sigma$ and $\Gamma$, one may wonder if it is possible to choose a finite number of low-order curvature invariants which, evaluated on FLRW ans\"atze, generate in a multiplicative way (together with the Ricci scalar) all possible polynomials \eqref{eq:cosmoanyinv}. To this aim, define:
\begin{equation}
\label{eq:kcosmodens}
\mathcal{K}_2=\frac{1}{D(D-1)}Z_{ab} Z^{ab}\, , \quad \mathcal{K}_3=-\frac{1}{D(D-1)(D-2)}Z_{ab} Z^{bc} Z_c^a\,.
\end{equation}
They satisfy:
\begin{equation}
\label{eq:kscosmo}
\left. \mathcal{K}_2 \right \vert_{\diff s^2_{a,F}}= \Gamma^2\, , \quad \left. \mathcal{K}_3 \right \vert_{\diff s^2_{a,F}}= \Gamma^3\,.
\end{equation}
We have the following result.
\begin{proposition}
\label{prop:invcosmo}
Let $\mathcal{R}_{(n)}$ be any linear combination of invariants of $n$-th order in the curvature. Then:
\begin{equation}
\left. \mathcal{R}_{(n)}\right \vert_{ \diff s^2_{a,F}}=\sum_{i=1}^p \beta_i \left. R^{b_1^{(i)}}\right \vert_{ \diff s^2_{a,F}} \left.  \mathcal{K}_2^{b_2^{(i)}} \right \vert_{ \diff s^2_{a,F}}  \left.\mathcal{K}_3^{b_3^{(i)}}\right \vert_{ \diff s^2_{a,F}} \, ,
\end{equation}
for certain coefficients $\beta_i$ and non-negative integers $b_1^{(i)}$, $b_2^{(i)}$, $b_3^{(i)}$ and $p$.
\end{proposition}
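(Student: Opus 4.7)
My plan is to combine the structural result of Proposition \ref{prop:noWeyl} with a small piece of elementary number theory (the Frobenius/``Chicken McNugget'' fact for the numerical semigroup generated by $2$ and $3$).

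First, I would apply Proposition \ref{prop:noWeyl} to reduce the problem. Since $\left.W_{abcd}\right\vert_{a,F}=0$, any invariant in $\mathcal{R}_{(n)}$ containing a Weyl tensor can be discarded, and the remaining pieces are contractions of $m$ Ricci scalars with $q$ traceless Ricci tensors, with $m+q=n$. By \reef{eq:1propflrw} each such non-vanishing contraction evaluates on the FLRW ansatz to $c_i\,\Sigma^{m_i}\Gamma^{q_i}$, so
\begin{equation}
\left.\mathcal{R}_{(n)}\right\vert_{\diff s^2_{a,F}}=\sum_{i}\beta_i\,\Sigma^{m_i}\Gamma^{q_i},\qquad m_i+q_i=n.
\end{equation}

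Second, I would observe that the exponent $q_i=1$ cannot appear among the non-zero terms. Indeed, the only admissible way to saturate the two indices of a single $Z_{ab}$ using only powers of $R$ (scalars) and the inverse metric is via the trace $g^{ab}Z_{ab}$, which vanishes identically by the tracelessness of $Z_{ab}$. Hence $q_i\in\{0\}\cup\{2,3,4,\dots\}$.

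Third, I would invoke the fact that the numerical semigroup generated by $2$ and $3$ equals $\{0\}\cup\{n\in\mathbb{N}:n\geq 2\}$, since $\gcd(2,3)=1$ and the Frobenius number of $\{2,3\}$ is $1$. Consequently, for each allowed $q_i$ there exist non-negative integers $b_2^{(i)},b_3^{(i)}$ with $q_i=2b_2^{(i)}+3b_3^{(i)}$, and therefore
\begin{equation}
\Gamma^{q_i}=(\Gamma^{2})^{b_2^{(i)}}(\Gamma^{3})^{b_3^{(i)}}
=\left.\mathcal{K}_2^{b_2^{(i)}}\right\vert_{\diff s^2_{a,F}}\left.\mathcal{K}_3^{b_3^{(i)}}\right\vert_{\diff s^2_{a,F}},
\end{equation}
using \reef{eq:kscosmo}. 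Combined with $\Sigma^{m_i}=\left.R^{m_i}\right\vert_{\diff s^2_{a,F}}$ and setting $b_1^{(i)}=m_i$, this yields the claimed decomposition.

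I do not expect a genuine obstacle: the only subtlety is noticing that $q_i=1$ is excluded by the tracelessness of $Z_{ab}$, for otherwise one would need a third generator (an odd power of $\Gamma$ independent of $\mathcal{K}_3$) to represent $\Gamma$ itself. Once $q_i=1$ is ruled out, the remainder is an immediate application of the Frobenius property of $\{2,3\}$ together with Propositions \ref{prop:noWeyl} and the definitions \eqref{eq:kcosmodens}.
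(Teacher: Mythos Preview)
Your proposal is correct and follows essentially the same route as the paper's proof, which simply invokes Proposition \ref{prop:noWeyl} together with \eqref{eq:kscosmo} and notes that both $\mathcal{K}_2$ and $\mathcal{K}_3$ are needed to reproduce even and odd powers of $\Gamma$. You are just a bit more explicit: you spell out that $q_i=1$ is forbidden by the tracelessness of $Z_{ab}$ and phrase the remaining step as the Frobenius property of the numerical semigroup $\langle 2,3\rangle$, which is exactly the content of the paper's parenthetical remark.
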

\begin{proof}
It follows directly from Proposition \ref{prop:noWeyl} and Equation \eqref{eq:kscosmo}. In another vein, observe that to generate terms with either an even or odd number of traceless Ricci tensors, one needs both curvature invariants $\mathcal{K}_2$ and $\mathcal{K}_3$ presented in Equation \eqref{eq:kcosmodens}.
\end{proof}
This proposition will be of use in the following section.
\section{Cosmological Gravities} \label{sec:cosmograv}

In this section we devote ourselves to the study of particular higher-curvature gravities which possess very special properties on FLRW configurations, which we shall call \emph{Cosmological Gravities}.
\begin{defi}
A theory $\mathcal{L}(g^{ab},R_{abcd})$ is said to be a Cosmological Gravity if its equations of motion for FLRW configurations \eqref{eq:flrwans} --- and henceforth for comoving FLRW configurations \eqref{eq:flrwansf1} --- are of second order in time derivatives.
\end{defi}
The most direct examples of Cosmological Gravities are Einstein Gravity or Lovelock gravities. However, these cases are somewhat trivial, since such theories possess second order equations of motion for any background. There also exist other instances of Cosmological Gravities which do not possess second order equations of motion when evaluated on generic backgrounds, as the following example shows.
\begin{example}
\label{ep:aej}
Consider the following higher-curvature theory in $D=4$ \cite{Arciniega:2018fxj}:
\begin{align}
\label{eq:AEJ}
\mathcal{L}_{\mathrm{AEJ}}&=R-2 \Lambda+ \alpha \ell^4 (\mathcal{P}-8 \hat{\mathcal{P}})\,,\\
\label{eq:paejdef}
\mathcal{P}&=12 R_{a\ b}^{\ c \ d}R_{c\ d}^{\ e \ f}R_{e\ f}^{\ a \ b}+R_{ab}^{cd}R_{cd}^{ef}R_{ef}^{ab}-12R_{abcd}R^{ac}R^{bd}+8R_{a}^{b}R_{b}^{c}R_{c}^{a}\,,\\
\label{eq:phataejdef}
\hat{\mathcal{P}}&=R_{abcd}R^{abc}{}_e R^{de}-\frac 14 R_{abcd} R^{abcd}R-2R_{abcd} R^{ac} R^{bd}+\frac{1}{2}R_{ab} R^{ab} R\,.
\end{align}
Although the generic gravitational field equations are of fourth order in derivatives, this theory possesses second-order equations of motion on top of FLRW backgrounds, so it is an example of a Cosmological Gravity.
\end{example}


In general, given a certain higher-order gravity $\mathcal{L}(g^{ab}, R_{abcd})$, one could check whether it is a Cosmological Gravity by studying whether the whole set of gravitational equations of motion for FLRW backgrounds boils down to the resolution of second-order equations of motion for the scale factor. Nevertheless, it would be desirable to test this condition by a more direct procedure which circumvents the task of deriving the general equations of motion. If $L_{a,F}$ denotes (as before) the reduced Lagrangian obtained by evaluating $\mathcal{L}(g^{ab}, R_{abcd})$ on the FLRW ansatz \eqref{eq:flrwans}, it turns out that there exists a concise requirement to discern whether a given theory is of the cosmological type, as we present next. This condition had already been analyzed in Reference \cite{Paulos:2010ke,Bueno:2022log}, and here we prove in detail that it is a necessary and sufficient condition to have second-order equations for $a(t)$. 
\begin{proposition} 
\label{prop:cosmocond}
A theory $\mathcal{L}(g^{ab},R_{abcd})$ is a Cosmological Gravity if and only if the associated reduced Lagrangian $L_{a,1}=L_{a,F} \vert_{F=1}$ on the comoving FLRW ansatz \eqref{eq:flrwansf1} is linear in $\ddot{a}$:
\begin{equation}
 \frac{\partial^2  L_{a,1} }{\partial \ddot{a}^2} =0\,.
\label{eq:cosmoteo}
\end{equation}
\end{proposition}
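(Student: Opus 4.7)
The plan is to reduce the statement to a standard Ostrogradsky-style analysis of the mechanical Lagrangian $L_{a,1}(a,\dot{a},\ddot{a})$, by leveraging Proposition \ref{prop:invcosmo}.

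First I would note that, by Proposition \ref{prop:invcosmo}, every curvature invariant restricted to \eqref{eq:flrwans} is a polynomial in $\Sigma$ and $\Gamma$; both of these depend on at most $\ddot{a}$ and $\dot{F}$ (in particular no $\ddot{F}$). Hence $\hat{L}_{a,F}$ is a function of $(a,\dot{a},\ddot{a},F,\dot{F})$ alone. Applying the $a$-EL operator of Proposition \ref{prop:eomflrw} produces cross partials such as $\partial^{2}\hat{L}_{a,F}/\partial\dot{F}\partial\dot{a}\cdot\ddot{F}$, $\partial^{2}\hat{L}_{a,F}/\partial\dot{F}\partial\ddot{a}\cdot\dddot{F}$, etc.; once $F\equiv 1$ is imposed (so $\dot{F}=\ddot{F}=\dots=0$), every such piece drops out and what survives is exactly the $a$-Euler--Lagrange equation of the mechanical Lagrangian $\hat{L}_{a,1}(a,\dot{a},\ddot{a})$. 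The volume factor relating $\hat{L}_{a,1}$ to $L_{a,1}$ depends only on $a$ (and $r$), not on $\dot{a}$ or $\ddot{a}$, so it does not alter the order in time derivatives. The problem is therefore reduced to asking when the Euler--Lagrange equation of a generic mechanical Lagrangian $L(a,\dot{a},\ddot{a})$ is of second order.

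Second, I would expand $\partial L/\partial a-(\diff/\diff t)\partial L/\partial\dot{a}+(\diff^{2}/\diff t^{2})\partial L/\partial\ddot{a}$ in descending powers of the derivatives of $a$. Only the last term can source $\ddddot{a}$, and its coefficient is $\partial^{2}L/\partial\ddot{a}^{2}$. Hence the $a$-equation can be of second order only if $\partial^{2}L_{a,1}/\partial\ddot{a}^{2}=0$, giving the \emph{only if} direction. Conversely, writing $L_{a,1}=A(a,\dot{a})\ddot{a}+B(a,\dot{a})$, the $\dddot{a}$ contributions from $-(\diff/\diff t)(\partial L_{a,1}/\partial\dot{a})$ and $(\diff^{2}/\diff t^{2})A$ are $-A_{\dot{a}}\dddot{a}$ and $+A_{\dot{a}}\dddot{a}$ respectively and cancel; the $\ddot{a}^{2}$ contributions cancel analogously, and the remaining terms assemble into the bona fide second-order ODE
\begin{equation*}
\left(2A_{a}+\dot{a}\,A_{a\dot{a}}-B_{\dot{a}\dot{a}}\right)\ddot{a}+A_{aa}\,\dot{a}^{2}-B_{a\dot{a}}\,\dot{a}+B_{a}=0\,,
\end{equation*}
settling the $a$-equation in both directions.

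Third, for the variation with respect to $F$, I would exploit that the diffeomorphism invariance of the covariant action becomes, on the mini-superspace ansatz \eqref{eq:flrwans}, invariance under time reparametrizations with $F$ playing the role of a squared lapse. Consequently the $F$-equation at $F=1$ is, up to a positive prefactor, the Hamiltonian constraint for this symmetry, i.e.\ the Ostrogradsky energy of $L_{a,1}$; equivalently, the Bianchi identity $\nabla^{a}\mathcal{E}_{ab}=0$ applied to \eqref{eq:geneomcos} relates the $F$-component $s_{1}$ to a first-order differential of the spatial component $s_{2}$. Under the assumption $L_{a,1}=A\ddot{a}+B$, a short calculation yields
\begin{equation*}
E=\dot{a}\,B_{\dot{a}}-\dot{a}^{2}\,A_{a}-B\,,
\end{equation*}
a function of $(a,\dot{a})$ only, so the $F$-equation $E=0$ is first order in $a$, a fortiori of second order. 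Together with the $a$-equation analysis this delivers the claimed equivalence. The main obstacle I anticipate is step one: justifying fully that the $a$-EL equation of $\hat{L}_{a,F}$ restricted to $F\equiv 1$ coincides, without loss of derivative order, with the EL equation of $\hat{L}_{a,1}$. The decisive input is the absence of $\ddot{F}$ from $\hat{L}_{a,F}$ guaranteed by Proposition \ref{prop:invcosmo}, for otherwise the total time derivatives in the EL operator could generate $\dddot{F},\ddddot{F}$ pieces that would survive at $F=1$ and spoil the identification.
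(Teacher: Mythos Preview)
Your argument is correct and follows a genuinely different route from the paper's. You reduce everything to the one-dimensional mechanical Lagrangian $L_{a,1}(a,\dot a,\ddot a)$ and run a standard Ostrogradsky analysis for the $a$-equation, then dispose of the $F$-equation abstractly via reparametrization invariance (equivalently, the contracted Bianchi identity applied to \eqref{eq:geneomcos}), which forces it to be the Ostrogradsky energy of $\hat L_{a,1}$ and hence first order once $L_{a,1}$ is affine in $\ddot a$. The paper instead works directly with the polynomial decomposition $L_{a,F}=\sum_p\beta_p\Gamma^p\Sigma^{n-p}$: it uses the pointwise identity $\partial_{\dot F}\Gamma=-\tfrac{\dot a}{2F}\,\partial_{\ddot a}\Gamma$ (and the same for $\Sigma$) to show that $\partial^2_{\ddot a}L_{a,F}=0$ and $\partial_{\ddot a}\partial_{\dot F}L_{a,F}=0$ are equivalent --- this is how the paper handles the $F$-equation --- then translates linearity in $\ddot a$ into the recurrence \eqref{eq:conscosmo} on the coefficients and checks by explicit computation that this recurrence kills the residual $\dddot a$ pieces in the $a$-equation. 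Your approach is more conceptual and applies verbatim to any reparametrization-invariant minisuperspace Lagrangian depending on at most $(\ddot a,\dot F)$; the paper's approach is heavier but produces the explicit recurrence on the $\beta_p$ as a byproduct, which is precisely what is needed immediately afterwards to build the representatives $\mathcal{C}^{(n)}$ in Theorem~\ref{theo:cosmogenteorias}. One minor comment: the step you flag as the main obstacle (commuting the $a$-EL operator with the restriction $F\to 1$) is in fact routine once $\ddot F$ is absent from $\hat L_{a,F}$; the step that actually deserves more care in your write-up is the identification of the $F$-equation at $F=1$ with the Ostrogradsky energy, which you assert rather than derive --- but the Bianchi/Noether argument you cite (yielding $\dot a\,E_a\propto\dot E_F$ at $F=1$) does close it.
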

\begin{proof}
Proposition \ref{prop:eomflrw} states that the equations of motion of $\mathcal{L}(g^{ab},R_{abcd})$ for FLRW configurations may be computed by working out the functional derivatives of $\hat{L}_{a,F}=\left. \left (\sqrt{\vert g \vert}\right ) \right \vert_{a,F} L_{a,F}$ with respect to $a$ and $F$. Since $\left. \left (\sqrt{\vert g \vert}\right ) \right \vert_{a,F}$ does not feature any derivatives of $a(t)$ and $F(t)$, for the purposes of this proposition we may equivalently work directly with $L_{a,F}$. All potential terms including third or fourth time-derivatives may arise from the following terms, appearing in the aforementioned functional derivatives (note that no second derivatives of $F$ appear in curvature invariants):
\begin{equation}
t_1=\frac{\diff}{\diff t} \left ( \frac{\partial  L_{a,F} }{\partial \dot{F}} \right )\,, \quad t_2=\frac{\diff}{\diff t} \left (\frac{\partial L_{a,F}}{\partial \dot{a}}-\frac{\diff}{\diff t} \left (\frac{\partial L_{a,F}}{\partial \ddot{a}} \right)  \right) \,.
\end{equation}
First, let us take a look at the term $t_1$. It can be proven that it will not produce terms with three time derivatives of the scale factor $a$ if:  
\begin{equation}
\frac{\partial^2  L_{a,F} }{\partial \ddot{a}^2}=0\,.
 \label{eq:t1acero}
\end{equation}
Now, Equation \eqref{eq:cosmoanyinv} implies that the reduced Lagrangian $L_{a,F}$, which we assume includes terms up to $n$-th curvature order, may be expressed as $L_{a,F}=\sum_{p=0}^n \beta_p \Gamma^p \Sigma^{n-p}$, for some couplings $\beta_p$. Then:
\begin{equation}
 \frac{\partial^2  L_{a,F} }{\partial \ddot{a}^2}=\sum_{p=0}^{n-2} \left (2(D-1)\alpha_p(n-1-p) -\frac{D-2}{D}\alpha_{p+1}(p+1)  \right )  \frac{\Gamma^p \Sigma^{n-p-2}}{a^2 F^2}\,,
 \end{equation}
 where we defined $\alpha_p=2(D-1)(n-p) \beta_p-\frac{D-2}{D}(p+1)\beta_{p+1}$. Since the set of $\Gamma^p \Sigma^l$ is linearly independent (both when $F=1$ and when not), we conclude that the previous expression --- and \emph{a fortiori} $ \dfrac{\partial^2  L_{a,F} }{\partial \ddot{a}^2} $ --- is zero if and only if:
\begin{equation}
2(D-1)\alpha_p(n-1-p)=\frac{D-2}{D}\alpha_{p+1}(p+1) \,, \quad p=0,\dots,n-2\,.
\label{eq:conscosmo}
\end{equation}
Let us move now to the study of the term $t_2$. Observe that Equation \eqref{eq:t1acero} guarantees that $t_2$ will possess no more than two time derivatives of $F$, so we can concentrate our efforts on potential higher-derivatives of the scale factor. To this aim, let us momentarily set $F=1$. In such a case, $t_2$ will not contain terms of third order or higher in derivatives of the scale factor if and only if:
\begin{equation}
\frac{\partial}{\partial \ddot{a}} \left ( \frac{\partial L_{a,1}}{\partial \dot{a}}-\frac{\diff}{\diff t } \left ( \frac{\partial L_{a,1}}{\partial \ddot{a}} \right) \right)=0\,.
\label{eq:condeqt2}
\end{equation}
If $\Sigma_{a,1}$ and $\Gamma_{a,1}$ denote evaluation of $\Sigma$ and $\Gamma$ on $F=1$ and defining $\gamma_p=\frac{D-2}{D} \alpha_{p+1}(p+1) +2(D-1)^2(n-p-1) \alpha_p$, after some algebra one finds that:
\begin{equation}
\frac{\partial}{\partial \ddot{a}}\left (\frac{\diff}{\diff t } \left ( \frac{\partial L_{a,1}}{\partial \ddot{a}} \right) \right)=\sum_{p=0}^{n-2} \gamma_p \frac{\dot{a}}{a^3} \Gamma_{a,1}^p \Sigma^{n-p-2}_{a,1}\,.
\label{eq:finalt2}
\end{equation}
On the other hand, the first term of \eqref{eq:condeqt2} reads:
\begin{equation}
\frac{\partial^2 L_{a,1}}{\partial \dot{a} \,  \partial \ddot{a}}=\sum_{p=0}^{n-2}\left ( \frac{2(D-2)}{D}(p+1) \alpha_{p+1}+2(D-1)(D-2) (n-p-1) \alpha_p\right ) \frac{\dot{a}}{a^3} \Gamma_{a,1}^p \Sigma^{n-p-2}_{a,1}\,.
\end{equation}
Subtracting the latter expression to the previous one \eqref{eq:finalt2}, direct use of Equation \eqref{eq:conscosmo} reveals that Expression \eqref{eq:condeqt2} is identically satisfied. If we now recover a generic $F(t)$ by a time reparametrization, we observe that Equation \eqref{eq:condeqt2} would remain unaffected, on account of Equation \eqref{eq:t1acero}. Hence Equation \eqref{eq:t1acero} (either for generic $F(t)$ or for $F=1$) is a necessary and sufficient condition for a theory to be a Cosmological Gravity and we conclude. 

\end{proof}

Having derived a convenient way with which to check if a higher-curvature gravity is of the cosmological type, our next goal will be to obtain the form of the equations of motion on FLRW backgrounds for the most general Cosmological Gravity. To this aim, we introduce some notation first.
\begin{defi}
Two Cosmological Gravities are said to be inequivalent if their equations of motion on the FLRW ansatz \eqref{eq:flrwans} are linearly independent. Otherwise, we call them equivalent.
\end{defi}
The previous definition introduces naturally the notion of equivalence classes of Cosmological Gravities. In particular, we will refer to each different equivalence class as \emph{inequivalent} Cosmological Gravity. Those Cosmological Gravities which are equivalent to the theory defined by a Lagrangian equal to zero form the trivial class of Cosmological Gravities, and the remaining theories will give rise to non-trivial inequivalent Cosmological Gravities. We are going to see next that there is one and only one non-trivial inequivalent Cosmological Gravity at each curvature order and all dimensions $D \geq 3$. 


For that, consider a higher-curvature gravity $\mathcal{L}=-2\Lambda +R+ \sum_{n=2}^\infty \mathcal{L}^{(n)}$, each piece $\mathcal{L}^{(n)}$ being of $n$-th curvature order. By Proposition \ref{prop:noWeyl} and Equation \eqref{eq:cosmoanyinv}, the evaluation of  $\mathcal{L}^{(n)}$ on the comoving FLRW ansatz \eqref{eq:flrwansf1}, which we denote as $L_{a,1}^{(n)}$, reads:
\begin{equation}
L_{a,1}^{(n)}=\sum_{p=0}^n \beta_n \Gamma_{a,1}^n \Sigma_{a,1}^{n-p}\, , \quad \beta_1=0\,,
\end{equation}
where we assume that both $\Gamma$ and $\Sigma$ are evaluated at $F=1$. Also, $\beta_1=0$ since no linear piece in $\Gamma$ may come from an actual curvature invariant evaluated on a comoving FLRW configuration. The condition \eqref{eq:cosmoteo}, in terms of the couplings $\beta_n$, was obtained in Equation \eqref{eq:conscosmo}, where it was written in terms of $\alpha_p=2(D-1)(n-p) \beta_p-\frac{D-2}{D}(p+1)\beta_{p+1}$. Equation \eqref{eq:conscosmo} may be expressed in terms of the constants $\beta_n$ as follows:
\begin{equation}
\begin{split}
\frac{(D-2)^2}{2D^2}(p+1)(p+2) \beta_{p+2} &-2 \frac{(D-1)(D-2)}{D}(p+1)(n-p-1) \beta_{p+1} \\& +2(D-1)^2 (n-p)(n-p-1) \beta_p=0\,, \quad p=0,..., n-2\,.
\end{split}
\label{eq:sisconalfa}
\end{equation}
If we set $\beta_0$ to be a free parameter, then \eqref{eq:sisconalfa} defines a linear system of $n-2$ equations for $n-2$ unknowns, since $\beta_1=0$. For every $n$, it turns out that it admits a unique solution given by:
\begin{equation}
\beta_p=-2^p(p-1) \binom{n}{p} \left ( \frac{D(D-1)}{D-2} \right)^p \beta_0\,.
\end{equation}
Therefore, we conclude that there exists a unique equivalence class of Cosmological Gravities at each order $n$, since the subsequent equations of motion on FLRW backgrounds, up to a constant, are unambiguously fixed at each curvature order. In virtue of Proposition \ref{prop:invcosmo} and Equation \eqref{eq:kscosmo}, it turns out that a representative $\mathcal{C}^{(n)}$ for each equivalence class may be expressed as follows\footnote{These $n$-th order invariants $\mathcal{C}^{(n)}$ turn out to satisfy the following intriguing recursive relation:
\begin{equation}
(n-3)\mathcal{C}^{(n)}=2(n-2)R\, \mathcal{C}^{(n-1)}-(n-1)\mathcal{C}^{(2)}\mathcal{C}^{(n-2)}\,.
\end{equation}}:
\begin{equation}
\begin{split}
\mathcal{C}^{(n)}&= R^n - \sum_{p=1}^{\lfloor n/2 \rfloor}  (2p-1)  \binom{n}{2p} \left ( \frac{D(D-1)}{D-2} \right)^{2p} R^{n-2p} (4 \mathcal{K}_2)^p\\&-\sum_{p=1}^{\lfloor (n-1)/2 \rfloor} 16 p  \binom{n}{2p+1} \left ( \frac{D(D-1)}{D-2} \right)^{2p+1}  R^{n-2p-1} (4 \mathcal{K}_2)^{p-1} \mathcal{K}_3 \,.
\end{split}
\label{eq:cosmolag}
\end{equation}
This proves the following theorem.
\begin{theorem}
\label{theo:cosmogenteorias}
The most general Cosmological Gravity $\mathcal{L}_{\text{cosmo}}$, up to equivalent Cosmological Gravities, is given by\footnote{Observe that $\mathcal{C}^{(1)}=R$.}
\begin{equation}
\mathcal{L}_{\text{cosmo}}=-2\Lambda+\sum_{p=1}^\infty \mu_p \ell^{2p-2} \mathcal{C}^{(p)},
\label{eq:mostgencosmo}
\end{equation}
where the terms $\mathcal{C}^{(n)}$ were given previously \eqref{eq:cosmolag}, $\mu_p$ are dimensionless couplings and $\ell$ is a length scale.
\end{theorem}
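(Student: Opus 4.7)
The plan is to reduce the problem to a linear recurrence on the coefficients parameterizing the reduced Lagrangian at each curvature order, solve the recurrence to establish uniqueness up to rescaling, and lift the one-parameter solution back to covariant form via Proposition \ref{prop:invcosmo}.

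Fix $n \geq 2$ (the cases $n = 0, 1$ give the $-2\Lambda$ and $R$ terms trivially). By Proposition \ref{prop:noWeyl} and Equation \eqref{eq:cosmoanyinv}, any linear combination of $n$-th order curvature invariants reduces on the comoving FLRW ansatz to a polynomial $L_{a,1}^{(n)} = \sum_{p=0}^n \beta_p \Gamma_{a,1}^p \Sigma_{a,1}^{n-p}$, with $\beta_1 = 0$ since a single power of $\Gamma$ would require a single trace of $Z_{ab}$, which vanishes by tracelessness. By the definition of equivalence, two pure $n$-th order theories have identical FLRW equations of motion precisely when the tuples $(\beta_0, \beta_2, \ldots, \beta_n)$ coincide, so it suffices to classify admissible tuples.

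Translating the cosmological condition of Proposition \ref{prop:cosmocond} then amounts to computing $\partial^2 L_{a,1}^{(n)}/\partial \ddot{a}^2$ via the chain rule from \eqref{eq:gammadef} and \eqref{eq:sigmadef}. Using linear independence of the monomials $\Gamma_{a,1}^p \Sigma_{a,1}^{n-p-2}$, the vanishing condition becomes the three-term linear recurrence \eqref{eq:sisconalfa} among $\beta_{p+2}$, $\beta_{p+1}$, $\beta_p$ for $p = 0, \ldots, n-2$. Since $\beta_1 = 0$, the $p = 0$ case fixes $\beta_2$ in terms of $\beta_0$, the $p = 1$ case then fixes $\beta_3$, and so on: the system is triangular with a one-parameter family of solutions, and induction on $p$ verifies that
\begin{equation}
\beta_p = -2^p (p-1)\binom{n}{p}\left(\frac{D(D-1)}{D-2}\right)^p \beta_0
\end{equation}
is the closed-form solution, with the factor $(p-1)$ automatically reproducing $\beta_1 = 0$. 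This establishes a unique nontrivial equivalence class at each curvature order.

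To exhibit a covariant representative, I would invoke Proposition \ref{prop:invcosmo} together with \eqref{eq:kscosmo}: the monomial $\Gamma_{a,1}^{2q}$ is reproduced by $\mathcal{K}_2^q$, while $\Gamma_{a,1}^{2q+1}$ (for $q \geq 1$) is reproduced by $\mathcal{K}_2^{q-1}\mathcal{K}_3$. Splitting the sum over $p$ into even and odd contributions and substituting the closed form for $\beta_p$ rearranges, after absorbing $\beta_0$ into an overall normalization and the accompanying powers of $4$ into the factors $(4\mathcal{K}_2)^q$, into precisely the expression \eqref{eq:cosmolag} for $\mathcal{C}^{(n)}$. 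Summing over $n$ with arbitrary dimensionless couplings $\mu_p$ and reinstating the cosmological constant then yields \eqref{eq:mostgencosmo}. The main obstacle is the inductive verification that the closed form for $\beta_p$ solves the recurrence \eqref{eq:sisconalfa}: substituting the ansatz reduces it to an identity on shifted binomials $\binom{n}{p}, \binom{n}{p+1}, \binom{n}{p+2}$ with coefficients depending rationally on $n, p, D$, which is purely mechanical but requires careful bookkeeping.
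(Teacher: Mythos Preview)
Your proposal is correct and follows essentially the same route as the paper: parametrize the reduced Lagrangian at order $n$ by the coefficients $\beta_p$ with $\beta_1=0$, impose the cosmological condition to obtain the three-term recurrence \eqref{eq:sisconalfa}, solve it to find the one-parameter family, and then lift back to covariant form via Proposition \ref{prop:invcosmo} and \eqref{eq:kscosmo}. The only cosmetic difference is that the paper reuses the intermediate relation \eqref{eq:conscosmo} already derived inside the proof of Proposition \ref{prop:cosmocond}, whereas you propose to recompute $\partial^2 L_{a,1}^{(n)}/\partial \ddot a^2$ directly---but the content is the same.
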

Therefore, we have managed to identify a representative of the unique non-trivial equivalence class of Cosmological Gravities at arbitrary curvature order $n$ and for any space-time dimension $D \geq 3$. Taking into account that Cosmological gravities satisfy automatically a holographic $c$-theorem and vice versa \cite{Bueno:2022log}, instances of Cosmological Gravities were built up to all orders in\footnote{We have checked that, up to an innocent global factor, our Cosmological Gravities \eqref{eq:mostgencosmo} for $D=3$ match exactly with those  theories satisfying a holographic $c$-theorem presented in  \cite{Bueno:2022lhf}, see Equation (141) of that Reference (v2 in arXiv). As pointed there, the $n$-th order three-dimensional invariants coincide, up to terms vanishing on FLRW ans\"atze, with those invariants arising in the expansion of the so-called Born-Infeld Massive Gravity \cite{Gullu:2010pc,Gullu:2010st,Alkac:2018whk}.} $D=3$ \cite{Bueno:2022lhf}, to eighth order in the curvature in four dimensions \cite{Arciniega:2018fxj,Cisterna:2018tgx,Arciniega:2018tnn} and also a recursive  algorithm to construct Cosmological Gravities (with covariant derivatives of the curvature, though) in generic spacetime dimension was provided in Reference \cite{Bueno:2022log}. Now, with the present result, we are able to confirm the existence of Cosmological Gravities (without covariant derivatives of the curvature) by an explicit example in all spacetime dimensions and curvature orders $n$. And, by the argumentation of \cite{Bueno:2022log}, we can equivalently claim to have found an instance of a theory satisfying a holographic $c$-theorem for all dimensions and curvature orders.

The lowest-order Cosmological Gravities presented in Equation \eqref{eq:cosmolag} read as follows\footnote{If we set $D=3$ in $\mathcal{C}^{(2)}$, we get the well-known New Massive Gravity theory \cite{Bergshoeff:2009aq,Bergshoeff:2009hq}. Also for $D=3$, our $\mathcal{C}^{(3)}$ and $\mathcal{C}^{(4)}$ exactly match with those theories presented in \cite{Alkac:2020zhg} and obtained as the ``holographic limit'' of Lovelock gravities.}:
\begin{align}
\mathcal{C}^{(1)}&=R\, ,\\
\mathcal{C}^{(2)}&=-\frac{4 D(D-1)}{(D-2)^2} Z_{ab}Z^{ab}+R^2\, ,\\
\mathcal{C}^{(3)}&=\frac{16 (D-1)^2 D^2 Z^a_b Z^b_cZ_a^c}{(D-2)^4}-\frac{12 (D-1) D R Z^a_b Z_a^b}{(D-2)^2}+R^3\, ,\\
\mathcal{C}^{(4)}&=\frac{64 (D-1)^2 D^2 R Z^a_b Z^b_cZ_a^c}{(D-2)^4}-\frac{48 (D-1)^2 D^2 \left(Z^a_b Z_a^b\right)^2}{(D-2)^4}-\frac{24 (D-1) D R^2 Z^a_b Z_a^b}{(D-2)^2}+R^4\, ,\\
\mathcal{C}^{(5)}&=\frac{128 (D-1)^3 D^3 Z^a_b Z_a^bZ^c_d Z^d_eZ_c^e}{(D-2)^6}+\frac{160 (D-1)^2 D^2 R^2 Z^a_b Z^b_cZ_a^c}{(D-2)^4}-\frac{240 (D-1)^2
   D^2 R \left(Z^a_b Z_a^b\right)^2}{(D-2)^4}\notag\\
&-\frac{40 (D-1) D R^3 Z^a_b Z_a^b}{(D-2)^2}+R^5\, ,\\
\mathcal{C}^{(6)}&=\frac{768 (D-1)^3 D^3 R Z^a_b Z_a^bZ^c_d Z^d_eZ_c^e}{(D-2)^6}-\frac{320 (D-1)^3 D^3 \left(Z^a_b Z_a^b\right)^3}{(D-2)^6}+\frac{320 (D-1)^2
   D^2 R^3 Z^a_b Z^b_cZ_a^c}{(D-2)^4}\notag\\
   &-\frac{720 (D-1)^2 D^2 R^2 \left(Z^a_b Z_a^b\right)^2}{(D-2)^4}-\frac{60 (D-1) D R^4
   Z^a_b Z_a^b}{(D-2)^2}+R^6\, .
\end{align}

Let us now present the form of the equations of motion of the most general Cosmological Gravity on comoving FLRW backgrounds \eqref{eq:flrwansf1}. For that, it is enough to consider the theory \eqref{eq:mostgencosmo}. In presence of a perfect fluid with density $\rho$ and pressure $p$, the generalized Friedmann equations associated to the theory \eqref{eq:mostgencosmo} are given by\footnote{Being first-order in time derivatives, \eqref{eq:genfried1} is naturally identified with the Hamiltonian constraint associated with the initial-value problem of our Cosmological Gravities on top of FLRW backgrounds.}:
\begin{align}
\label{eq:genfried1}
\frac{1}{2D} \sum_{n=1}^\infty  \mu_n \ell^{2n-2} \mathcal{F}_n=8 \pi G \rho+\Lambda\,,\\
\label{eq:genfried2}
-\frac{a}{2 D(D-1)\dot{a}}\sum_{n=1}^\infty  \mu_n \ell^{2n-2} \dot{\mathcal{F}}_n=8 \pi G(\rho+P)\,,
\end{align}
where $\Lambda$ is the cosmological constant, $\mu_1=1$ and 
\begin{equation}
\mathcal{F}_n=(D-2n) \left (\frac{D(D-1)(k+\dot{a}^2)}{a^2} \right)^n\,.
\end{equation}
Observe that differentiating \eqref{eq:genfried1} and making use of \eqref{eq:genfried2}, one gets the conservation equation:
\begin{equation}
\frac{\diff \rho}{\diff t}+(D-1) \frac{\dot{a}}{a}(\rho+p)=0\,.
\label{eq:cons}
\end{equation}

\subsection{Numerical resolution of the generalized Friedmann equations}

As an application, we are going to show an efficient way to solve numerically the generalized Friedmann equations \eqref{eq:genfried1} and \eqref{eq:genfried2} for our (four-dimensional) Universe. For that, we will assume that the spatial sections of the Universe are flat ($k=0$) and that it is filled with some matter and radiation. With the additional presence of a non-vanishing cosmological constant, we rewrite the generalized Friedmann equations as follows:
\begin{align}
\label{eq:genfried1new}
\frac{H^2}{H_0^2}+ \sum_{n=2}^\infty \mu_n (\ell H_0)^{2n-2} 12^{n-1} (2-n) \left ( \frac{H}{H_0} \right)^{2n}&= \frac{\Omega_m}{a^3}+\frac{\Omega_r}{a^4}+\Omega_\Lambda\,,\\
-\frac{2 \dot{H}}{3 H_0^2}\left (1+\sum_{n=2}^\infty \mu_n (\ell H_0)^{2n-2} 12^{n-1} n (2-n)\left ( \frac{H}{H_0} \right)^{2n-2} \right)&= \frac{\Omega_m}{a^3}+\frac{4}{3}\frac{\Omega_r}{a^4}\,,
\label{eq:genfried2new}
\end{align}
where we have introduced the Hubble factor $a H=\dot{a}$, with $H_0$ denoting its value at present day, while $\Omega_m$, $\Omega_r$ and $\Omega_\Lambda$ are defined as usual:
\begin{equation}
\Omega_m=\frac{8 \pi G \rho_{m}^{(0)}}{3 H_0^2}\,, \quad \Omega_r=\frac{8 \pi G \rho_{r}^{(0)}}{3 H_0^2}\,, \quad \Omega_\Lambda=\frac{\Lambda}{3 H_0^2}\,,
\end{equation}
where $\rho_m^{(0)}$ and $\rho_r^{(0)}$ denote the current matter and radiation densities in the Universe. In addition to the measurement of the parameters $H_0$, $\Omega_m$, $\Omega_r$ and $\Omega_{\Lambda}$, whose values can be found in \cite{Planck:2018vyg,DESI:2024mwx}, the so-called deceleration  parameter $q_0$, the cosmological jerk $j_0$ and snap $s_0$ are also known. These are defined in terms of the following expansion of the scale factor \cite{Visser:2003vq} around present day:
\begin{equation}
a(t)=1+H_0 t-\frac{H_0^2 q_0}{2} t^2+\frac{H_0^3 j_0}{6} t^3+ \frac{H_0^4 s_0}{24}t^4+\dots\,.
\label{eq:cosmographic}
\end{equation}
Choosing our length scale $\ell$ to be $\ell=\sqrt{\kappa}=\sqrt{8 \pi G}$, substituting \eqref{eq:cosmographic} into \eqref{eq:genfried1new} and solving order by order (up to $t^3$) the resulting equations, one obtains the values of the couplings\footnote{Remember that there is no non-trivial Cosmological Gravity in $D=4$ of quadratic order in the curvature.} $\mu_3,\mu_4,\mu_5$ and $\mu_6$. This is quite interesting, since this provides a direct way with which to experimentally set the values of the higher-curvature couplings associated with our Cosmological Gravities. We use the following values for the cosmological parameters, extracted from the combined data of the Planck and DESI collaborations\footnote{We take the Hubble constant $H_0 \simeq 67.97 \pm 0.38 \, \mathrm{km} \, \mathrm{s}^{-1} \, \mathrm{Mpc}^{-1}$, which is the value obtained when assuming the flat $\Lambda$CDM model for cosmology.} \cite{Planck:2018vyg,DESI:2024mwx}:
\begin{align}
\nonumber
\Omega_m & \simeq 0.3069\,, \quad \Omega_\Lambda \simeq 0.6931\,, \quad \Omega_r \simeq 9.2 \times 10^{-5}\,, \\
 q_0 & \simeq -0.540\,, \quad j_0 \simeq 1.000 \,, \quad s_0 \simeq -0.381\,,
\end{align}
one gets:
\begin{align}
\nonumber
\mu_3&\simeq \frac{5.77 \times 10^{-5}}{\kappa^2 H_0^4 }\,, \quad \mu_4\simeq -\frac{6.00 \times 10^{-6} }{\kappa^3 H_0^6}\,, \\
\label{eq:couplings} \mu_5 &\simeq \frac{2.75\times 10^{-7}}{\kappa^4 H_0^8}\,, \quad \mu_6 \simeq -\frac{4.83\times 10^{-9}}{\kappa^5 H_0^{10}}\,.
\end{align}
Once we have obtained the values of the couplings $\mu_n$ up to sixth order in the curvature, we may proceed to solve the generalized Friedmann equations up to that order, discarding those theories of higher-order in the curvature\footnote{If it were possible to go further in the cosmographic expansion \eqref{eq:cosmographic}, further couplings $\mu_n$ with $n >6$ could be obtained}. To this aim, observe that \eqref{eq:genfried1new} and \eqref{eq:genfried2new} are equivalent to each other after taking into account the equation of state of the different species which are present in the Universe. Indeed, one may get \eqref{eq:genfried2new} by differentiating \eqref{eq:genfried1new} and using the conservation equation \eqref{eq:cons}. Therefore, we can actually use any of the Friedmann equations to solve for the scale factor $a(t)$.

Nevertheless, for our purposes it is more convenient to use \eqref{eq:genfried2new}. Although it is of second order in derivatives, the term $\ddot{a}$ appears linearly in the equation and this greatly simplifies its numerical resolution (instead, \eqref{eq:genfried1new} possess powers of 
twelfth order in $\dot{a}$, hampering the corresponding numerical resolution). Solving at present time $t_0=0$ and using the initial values $a_0=1$ and $\dot{a}_0=H_0$, one gets the scale factor $a(t)$ presented in Figure \ref{fig:1}.

\begin{figure}[H]
\centering
\includegraphics[scale=0.55]{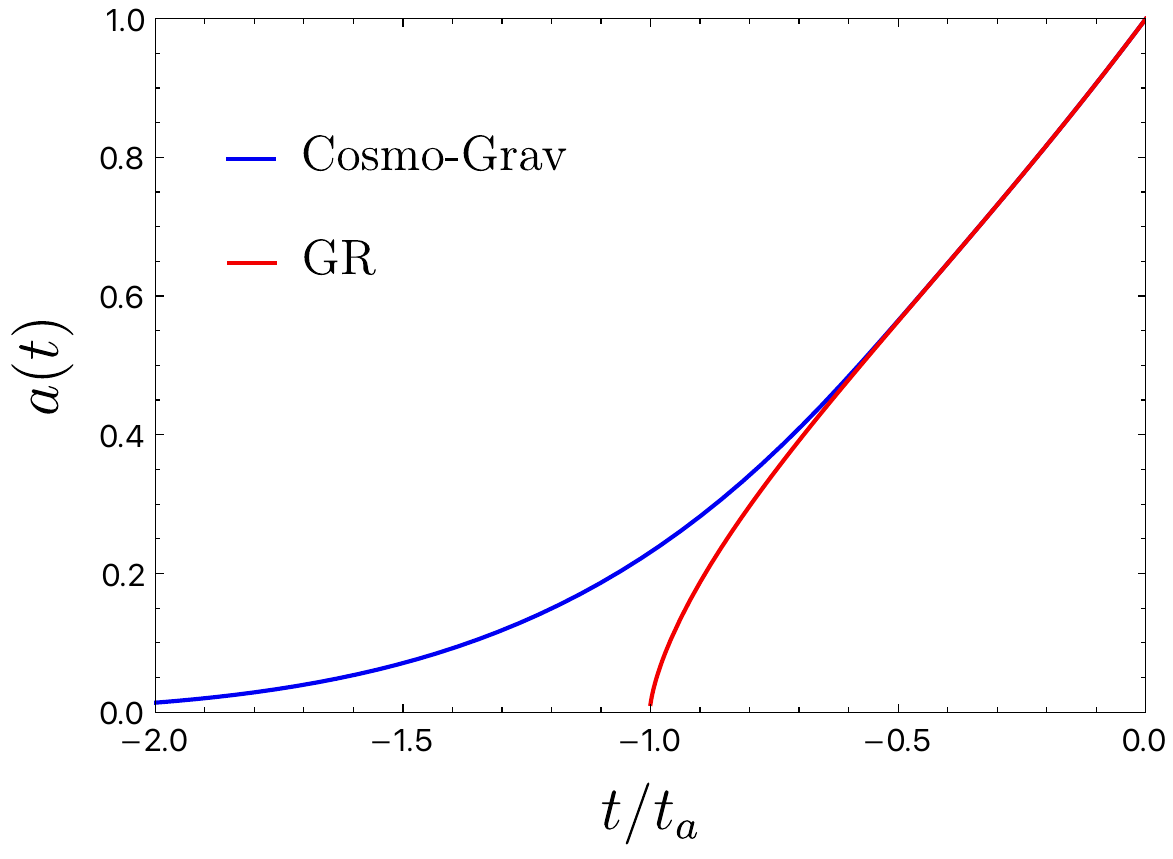}
\caption{Dynamical evolution of the scale factor in General Relativity (denoted in the plot as ``GR'') and in our Cosmological Gravities (denoted as ``Cosmo-Grav''), including up to sixth-order corrections in the curvature. We have used the values for the couplings $\mu_n$ obtained in \eqref{eq:couplings}. Additionally, $t_a$ stands for the age of the Universe and $t=0$ stands for present day.}
\label{fig:1}
\end{figure}

Two comments are in order. First, we note that the inclusion of higher-curvature corrections smooths the GR singularity. Nevertheless, we have observed that it appears around $t \simeq -3.0 t_a$, where $t_a$ stands for the age of the Universe (according to GR). This was to be expected, since the full singularity resolution should only occur after the inclusion of the complete tower of higher-curvature terms\footnote{These aspects are particularly appealing for the problem of the Big Bang initial conditions.} (here we truncated the tower at sixth order in the curvature) \cite{Arciniega:2018tnn}. Second, we have checked that our model featuring higher-curvature terms naturally accounts for an inflationary era\footnote{Taking the beginning of inflation to be the moment in which the radiation density equals the Planck density, and the end of inflation to be the moment in which the the acceleration starts to decrease, we found that inflation occurred with a number of $\simeq$ 70 e-folds.} followed by a radiation-dominated epoch, a matter-dominated era and the current moment of late-time acceleration. This phenomenon of geometric inflation was already reported in \cite{Arciniega:2018tnn,Cisterna:2018tgx} and provides an intriguing explanation for inflation without the need of additional matter fields besides the metric. The novelty of our work lies in the fact that the scale factor presented in Figure \ref{fig:1} was obtained after choosing the couplings $\mu_n$ of our Cosmological Gravities to match with observations and then using such values to solve for the generalized Friedmann equations. Such a novel approach was not possible in \cite{Cisterna:2018tgx}, where no Cosmological Gravities beyond fourth order in the curvature were employed, nor in \cite{Arciniega:2018tnn}, where emphasis was placed on the replacement of the radiation-dominated early Universe by an epoch of exponential growth of the scale factor.


\section{Cosmological Generalized Quasitopological Gravities}\label{sec:CGQG}


Once we have found an explicit representative of the unique non-trivial inequivalent Cosmological Gravity existing at each curvature order in all dimensions $D\geq 3$, our next goal will be to find other representatives which satisfy additional properties when evaluated on different gravitational backgrounds.

We will be interested in looking for Cosmological Gravities which admit, furthermore, a special type of non-hairy generalizations of the Schwarzschild-Tangherlini black-hole solution. In particular, we will concentrate on those Cosmological Gravities for which the aforementioned static and spherically symmetric black holes are fully characterized by a single function whose equation of motion can be analytically integrated into a second-order one. Such static and spherically symmetric configurations may be written as follows\footnote{Observe that this is not the most general static and spherically symmetric ansatz one may think of. Indeed, it is only possible to perform coordinate transformations so as to write such a metric as $\diff s^2=-N(r)^2 f(r) \diff t^2+\frac{1}{f(r)} \diff r^2+r^2 \diff \Omega^2_{D-2}$. Equation \eqref{eq:sss} corresponds to the special case $N(r)=1$, as in the Schwarzschild-Tangherlini solution.}:
\begin{equation}
\diff s_f^2=-f(r) \diff t^2+\frac{1}{f(r)} \diff r^2+r^2 \diff \Omega^2_{D-2}\,,
\label{eq:sss}
\end{equation}
where $\diff \Omega^2_{D-2}$ is the metric of the round $D-2$ sphere. Higher-curvature gravities (non-necessarily cosmological) enjoying these features have been extensively explored in the literature under the name of Generalized Quasitopological Gravities (GQTGs) \cite{Oliva:2010eb,Myers:2010ru,Dehghani:2011vu,Bueno:2016xff,Hennigar:2017ego,Bueno:2019ltp,Bueno:2019ycr,Bueno:2022res,Aguilar-Gutierrez:2023kfn}. The space of GQTGs may be further subdivided into the class of Quasitopological Gravities, for which the equation of motion for $f(r)$ can be integrated into a purely algebraic one, and the class of proper GQTGs, for which such an integrated equation is strictly of second order in derivatives of $f(r)$. Additional relevant properties satisfied by GQTGs are that they allow for the analytic computation of black-hole thermodynamics, possess second-order linearized equations of motion on top of maximally symmetric backgrounds and form a perturbative basis for the space of effective theories of gravity \cite{Bueno:2016xff,Hennigar:2016gkm,Bueno:2016lrh,Hennigar:2017ego,Bueno:2017qce,Cisterna:2017umf,Bueno:2017sui,Bueno:2019ltp,Bueno:2019ycr,Cano:2019ozf,Bueno:2022res}.


\begin{defi}
A theory $\cL( g^{ab},R_{a b c d})$ is said to be a Cosmological GQTG if and only if it is both a Cosmological Gravity and a GQTG. 
\end{defi}

Specifically, a Cosmological GQTG possesses equations of motion of second order in derivatives for the function $f(r)$ in static and spherically symmetric configurations \eqref{eq:sss} and for the scale factor $a(t)$ in FLRW backgrounds  \eqref{eq:flrwansf1}. If the theory is of the quasitopological class, we will coin the name Cosmological Quasitopological Gravity.

Apart from the somewhat trivial case of Einstein gravity (and Lovelock gravities for generic spacetime dimension), explicit examples of cosmological GQTGs have been explicitly found in $D=4$. 
\begin{example}
Consider again the four-dimensional cubic theory proposed in Reference \cite{Arciniega:2018fxj}, which we rewrite here for the benefit of the reader:
\begin{align}
\mathcal{L}_{\mathrm{AEJ}}&=R-2 \Lambda+ \alpha \ell^4 (\mathcal{P}-8 \hat{\mathcal{P}})\,,
\end{align}
where the combinations of cubic invariants $\mathcal{P}$ and $\hat{\mathcal{P}}$ were presented in Equations \eqref{eq:paejdef} and \eqref{eq:phataejdef}. We already explained in Example \ref{ep:aej} that this theory is a Cosmological Gravity. However, by direct computation one observes that $\hat{\mathcal{P}}$ is identically zero when evaluated on generic static and spherically symmetric backgrounds, and one may also notice that $\mathcal{P}$ actually coincides with the celebrated Einsteinian Cubic gravity \cite{Bueno:2016xff}, the first proper GQTG ever identified. Consequently, we conclude that $\mathcal{L}_{\mathrm{AEJ}}$ is an instance of Cosmological GQTG.
\end{example}

In the previous example, we implicitly used the notion of trivial GQTGs. They are defined as combinations of curvature invariants which do not contribute to the subsequent equations of motion of the function $f(r)$ in Equation \eqref{eq:sss}. They will be crucial in the following, since they will allow us to convert non-cosmological GQTGs into Cosmological GQTGs, as we will momentarily see. 

Having said this, now we proceed to the problem of finding an instance of Cosmological GQTG at each curvature order and for all dimensions\footnote{In $D=3$, the only non-trivial GQTG is Einstein gravity \cite{Bueno:2022lhf}, so we opt not to consider the three-dimensional case in this manuscript.} $D \geq 4$. Previous literature in the topic focused on the four-dimensional case and identified Cosmological GQTGs up to eighth order in the curvature. Nevertheless, no closed expression valid for arbitrary orders in the curvature is known. Moreover, in dimensions $D \geq 5$, the problem of finding Cosmological GQTGs seems not to have been previously explored (with the celebrated exception of cubic-curvature theories, for which it was noted the existence of a Quasitopological Gravity of the cosmological type \cite{Bueno:2022log}). First, we devote ourselves to the identification of Cosmological GQTGs in dimensions $D \geq 5$ at all curvature orders and, afterwards, we concentrate on the four-dimensional case, finding instances of Cosmological GQTGs at all orders. 


\subsection{Cosmological GQTGs in $D \geq 5$}

Let us begin by presenting some notation, extracted from Reference \cite{Moreno:2023rfl}. We define the following curvature invariants:
\begin{align}
\label{eq:dd1}
\mathsf{W}_2 &=\frac{4}{(D-2)^2(D-1)(D-3)} W_{a b c d} W^{a b c d}\,, \\
\label{eq:dd2}
\mathsf{Z}_2 & =\frac{2}{D(D-2)} Z^a_b Z_a^b\, ,\\\
\label{eq:dd3}
\mathsf{W}_3 &= \frac{8}{(D-3)(D-2)(2(2-(D-3)^2)+(D-2)^2(D-3)^2)} W_{ab}{}^{cd}W_{cd}{}^{ef}W_{ef}{}^{ab}\,,\\
\label{eq:dd4}
\mathsf{Y}_3 & = \frac{8}{D^2(D-2)(D-3)}  W_{ac}{}^{bd} Z^a_b Z^c_d\, , \\
\label{eq:dd5}
\mathsf{X}_3 & = -\frac{8}{(D-1)^2(D-2)(D-3)(D-4)}  W_{a c d e}W^{bcde} Z^a_b\,,\\
\label{eq:dd6}
\mathsf{Z}_3 & = -\frac{4}{D(D-2)(D-4)} Z^a_b Z^b_cZ_a^c\, , \\
\label{eq:dd7}
\mathsf{Y}_4 & = - \frac{16}{D^2(D-2)(D-3)(D-4)}  Z^{a}_b Z_{a c} W^{bdce} Z_{d e}  \, ,\\
\label{eq:dd8}
\mathsf{X}_4 & = -\frac{32}{D(D-1)^2(D-2)(D-3)^2(D-4)} Z^{ab} W_{acbd} W^{c efg} W^d{}_{efg}\, .
\end{align}
Remember that $D \geq 5$. Also, we introduce:
\begin{align}
\I^{(1)}_l&=\mathsf{W}_2^{\frac{l-\pi_l}{2}} \left ( (1-\pi_l) \mathsf{W}_2 +\pi_l \sW_3\right)\, , \\
\I^{(2)}_l&=\mathsf{Z}_2^{\frac{l-\pi_l}{2}} \left (  (1-\pi_l) \mathsf{Z}_2 +  \pi_l \mathsf{Z}_3 \right)\,, \\
\I^{(3)}_l &=\mathsf{W}_2^{\frac{l-\pi_l}{2}} \left ( (1-\pi_l) \mathsf{X}_3 +  \pi_l \mathsf{X}_4\right)\, , \\
\I^{(4)}_l&=\mathsf{Z}_2^{\frac{l-\pi_l}{2}} \left (  (1-\pi_l) \mathsf{Y}_3 +  \pi_l \mathsf{Y}_4 \right)\,,
\end{align}
where
\begin{equation}
\pi_l= \left \lbrace \begin{matrix}
0\,  \quad l \,\, \mathrm{even}\\ 
1\,  \quad l \,\, \mathrm{odd}
\end{matrix} \right.\,.
\end{equation}
Now consider the following order-$n$ curvature invariants:
\begin{align}
\mathcal{Z}_{(1)}&=R\,, \\
\mathcal{Z}_{(2)}&=R^2+\frac{(D-1)D W_{abcd}W^{abcd}}{(D-3)(D-2)}-\frac{4D(D-1) Z_{ab}Z^{ab}}{(D-2)^2}\,, \\
\mathcal{Z}_{(n)}&=R^n+ \sum_{l=0}^{n-2}  R^{n-l-2}\left (\gamma_{n,-2,l}\mathcal{I}^{(1)}_l+\gamma_{n,l,-2}\mathcal{I}^{(2)}_l\right)+\sum_{l=0}^{n-3}  R^{n-l-3}\left (\gamma_{n,-1,l} \mathcal{I}^{(3)}_l+\gamma_{n,l,-1}\mathcal{I}^{(4)}_l\right)\nonumber \\ &+ \sum_{l=0}^{n-4}\sum_{p=0}^{n-l-4} \gamma_{n,l,p} R^{n-l-p-4}\mathcal{I}^{(1)}_p \mathcal{I}^{(2)}_l\,, \quad n \geq 3 \,, \label{eq:noncosmoQTG}
\end{align}
where $\gamma_{n,l,p}$ is only non-zero for $l,p \geq -2$ and $l+p +4 \leq n$, in which case
\begin{align}
\label{eq:gammaconst}
\gamma_{n,l,p}&=\frac{n!(D(D(l-2)+4)(l+1)+4(D-1)(D l+1)(p+2)+4(D-1)^2(p+2)^2)}{2^{2-l+p} (D^2-D)^{-p-l-3} (D-2)^{l+2} (l+2)!(p+2)!(n-l-p-4)! }\,.
\end{align}
It was shown in Reference \cite{Moreno:2023rfl} that $\mathcal{Z}_{(n)}$ defines a Quasitopological gravity for all $D \geq 5$ and curvature order $n \geq 1$. Now we are going to try to build Cosmological GQTGs of order $n$ in the curvature by adding trivial GQTGs (i.e., combinations of curvature invariants which do not contribute to the equation of $f(r)$ in Equation \eqref{eq:sss}) to $\mathcal{Z}_{(n)}$. It is important to note that this is a quite adventurous proposal. Indeed, for $D \geq 5$  there exist $n-1$ GQTGs at each curvature order $n$ providing linearly independent equations of motion for the function $f(r)$, only one of the quasitopological class \cite{Bueno:2022res,Moreno:2023rfl}. Therefore, it is not clear at all that starting solely from the Quasitopological Gravity $\mathcal{Z}_{(n)}$ and deforming it through the addition of trivial terms will result in a Cosmological Quasitopological Gravity. However, such an idea turns out to be successful, as we proceed to show.

For that, let us consider the following trivial GQTGs:
\begin{align}
\nonumber
\mathcal{T}_{(n+2)}^{\mathcal{Z}}= \frac{2  (\mathcal{K}_2)^{\frac{n-2-\pi_n}{2}}}{D^3(D^2-4D+3)}& \left ((D-2)\left (Z_{a}^bZ_{b}^c Z_c^d Z_d^a-\frac{D^2-6D+12}{2D(D-2)} (Z_a^b Z_b^a)^2 \right) (1-\pi_n) \right. \\ \label{eq:trivialgqtgeq} & \left.-\left ( Z_{a}^bZ_{b}^c Z_c^d Z_d^e Z_e^a-\frac{D(D-4)+8}{2D(D-2)} Z_a^b Z_b^a Z_c^d Z_d^e Z_e^c \right )\pi_n  \right )\,, \quad n \geq 0\,.
\end{align}
They satisfy $\left. \mathcal{T}_{(n+2)}^{\mathcal{Z}}\right \vert_{\diff s^2_f}=0$, so they are trivial GQTGs. However, $\mathcal{T}_{(n+2)}^{\mathcal{Z}}$ does not vanish when evaluated on FLRW backgrounds \eqref{eq:flrwansf1}, so it does contribute to the equations of motion of cosmological backgrounds. In particular, we have that
\begin{equation}
\left. \mathcal{T}_{(n+2)}^{\mathcal{Z}} \right \vert_{\diff s^2_{a,F}}= \Gamma^{n+2}\,,
\end{equation}
where $\Gamma$ was defined back at Equation \eqref{eq:gammadef}. Having said this, now observe that the Quasitopological Gravities presented in Equation \eqref{eq:noncosmoQTG} have a quite simple form when evaluated on the FLRW background \eqref{eq:flrwans}:
\begin{equation}
\left. \mathcal{Z}_{(n)} \right \vert_{\diff s^2_{a,F}}=\Sigma^n+ \sum_{l=0}^{n-2} \kappa_{n,l} \Sigma^{n-l-2} \Gamma^{l+2}\,,
\end{equation}
where 
\begin{equation}
\kappa_{n,l}= \binom{n}{l+2}\frac{D^{l+2} (l+1)((l-2)D+4))}{4(D-1)} \left (\frac{2(D-1)}{D-2} \right)^{\frac{3l+6-\pi_l}{2}} \left (1+ \frac{D}{D-4} \pi_l \right)\,.
\end{equation}
Interestingly enough, it turns out that $\mathcal{Z}_{(n)}$ for $n \leq 3$ is already a Cosmological Gravity for every $D \geq 5$, without adding further curvature invariants. To find Cosmological GQTGs (of the quasitopological class) at all curvature orders $n \geq 4$, let us consider:
\begin{equation}
\mathsf{Q}_{(n)}= \mathcal{Z}_{(n)} + \sum_{l=0}^{n-2} \eta_{n,l}  R^{n-l-2}\mathcal{T}_{(l+2)}^{\mathcal{Z}} \,, \quad  n \geq 4
\label{eq:teoriacuasiaunno}
\end{equation}
where $\eta_{n,l}$ are certain constants. The evaluation on top of the FLRW background \eqref{eq:flrwans} reads:
\begin{equation}
\mathsf{Q}_{(n)} \vert_{\diff s^2_{a,F}}=\left. \mathcal{Z}_{(n)} \right \vert_{\diff s^2_{a,F}}+ \sum_{l=0}^{n-2} \eta_{n,l} \Sigma^{n-l-4} \Gamma^{l+4}\,.
\label{eq:qquasicond}
\end{equation}
Now, according to Proposition \ref{prop:cosmocond}, $\mathsf{Q}_{(n)}$ will be a Cosmological Gravity if and only if:
\begin{equation}
\frac{\partial^2 \mathsf{Q}_{(n)} \vert_{\diff s^2_{a,F}} }{ \partial \ddot{a}^2}=0\,.
\label{eq:cosmocondquasi}
\end{equation}
After some computations, it can be seen that this requirement uniquely fixes the coefficients $\eta_{n,l}$ to be:
\begin{align}
\nonumber  \eta_{n,l}=-\frac{2^{l+1} (l+1) (D-1)^{l+2} D^{l+2}}{(D-2)^{l+3}} & \label{eq:etaquasi} \binom{n}{l+2} \Bigg (2(D-2)\\  & \hspace{-3cm}+\left. \frac{(D(l-2)+4)(D(l+1-2 \lfloor l/2 \rfloor)-4)}{D-4} \left ( \frac{2D-2}{D-2} \right)^{\lfloor l/2 \rfloor} \right) \,,
\end{align}
where $\lfloor x \rfloor$ denotes the largest integer less or equal than $x$  and $n \geq 4$. With this choice, $\left. \mathsf{Q}_{(n)} \right \vert_{\diff s^2_{a,F}}$ will automatically satisfy condition \eqref{eq:cosmocondquasi}. Since $\mathcal{T}_{(l+2)}^{\mathcal{Z}}$ is a trivial GQTG, we arrive to the following theorem.

\begin{theorem}
Let us consider the $n$-th order higher-curvature gravities:
\begin{align}
\mathsf{Q}_{(n)}&=\mathcal{Z}_{(n)}\,,  \quad 1 \leq n \leq 3 \\
\label{eq:quasicosmo}
\mathsf{Q}_{(n)}&=\mathcal{Z}_{(n)} + \sum_{l=0}^{n-2} \eta_{n,l}  R^{n-l-2}\mathcal{T}_{(l+2)}^{\mathcal{Z}}\,, \quad n \geq 4\,,
\end{align}
where the constants $\eta_{n,l}$ are given by Equation \eqref{eq:etaquasi} and $\mathcal{T}_{(l+4)}^{\mathcal{Z}}$ denotes the trivial GQTGs presented in Equation \eqref{eq:trivialgqtgeq}. Then,  $\mathsf{Q}_{(n)}$ is a Cosmological Quasitopological Gravity for all $D \geq 5$ and $n \geq 1$.
\end{theorem}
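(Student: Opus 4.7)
The proof naturally splits into two independent checks: that $\mathsf{Q}_{(n)}$ remains in the Quasitopological class for every $D\geq 5$ and $n\geq 1$, and that it additionally satisfies the cosmological condition of Proposition \ref{prop:cosmocond}. For the first, I would rely entirely on the result of \cite{Moreno:2023rfl} asserting that $\mathcal{Z}_{(n)}$ is a Quasitopological Gravity, together with the defining property $\left.\mathcal{T}_{(l+2)}^{\mathcal{Z}}\right\vert_{ds^2_f}=0$ of the trivial pieces \eqref{eq:trivialgqtgeq}. Because the evaluation of $\mathsf{Q}_{(n)}$ on the static and spherically symmetric ansatz \eqref{eq:sss} coincides identically with that of $\mathcal{Z}_{(n)}$, the algebraic equation for $f(r)$ is unchanged, so the Quasitopological property is inherited automatically. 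The cases $n\leq 3$ reduce further: $\mathsf{Q}_{(n)}=\mathcal{Z}_{(n)}$ by definition, so the Quasitopological property is immediate, and the cosmological property must be verified separately by checking \eqref{eq:cosmoteo} on the explicit low-order expressions, which is straightforward because the sum in \eqref{eq:qquasicond} is empty (no $\Gamma^{l+4}$ terms can be generated at orders $n\leq 3$).

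For the cosmological part at $n\geq 4$, the plan is to translate the condition \eqref{eq:cosmocondquasi} into a linear system for the unknown constants $\eta_{n,l}$ and then match it to \eqref{eq:etaquasi}. Starting from Equation \eqref{eq:qquasicond}, I would expand $\left.\mathcal{Z}_{(n)}\right\vert_{ds^2_{a,F}}=\Sigma^n+\sum_{l=0}^{n-2}\kappa_{n,l}\Sigma^{n-l-2}\Gamma^{l+2}$ and apply $\partial^2/\partial\ddot{a}^2$ using the identities
\begin{equation}
\frac{\partial \Sigma}{\partial\ddot{a}}=\frac{2(D-1)}{aF}\,,\qquad \frac{\partial\Gamma}{\partial\ddot{a}}=-\frac{D-2}{DaF}\,,
\end{equation}
which are immediate from \eqref{eq:gammadef} and \eqref{eq:sigmadef}. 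Since $\{\Gamma^p\Sigma^q\}$ is a linearly independent family at each fixed $p+q$, the vanishing of $\partial^2\mathsf{Q}_{(n)}\vert_{ds^2_{a,F}}/\partial\ddot{a}^2$ decouples into one linear equation for each monomial $\Sigma^{n-l-4}\Gamma^{l+2}$ with $l=0,\dots,n-4$, yielding a triangular system in which $\eta_{n,l}$ is determined in terms of $\kappa_{n,l}$ and $\kappa_{n,l+2}$ alone.

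Solving this system in closed form is the principal technical obstacle. The awkwardness comes from the parity-dependence introduced by $\pi_l$ in the definition of $\kappa_{n,l}$ and by $\lfloor l/2\rfloor$ in \eqref{eq:etaquasi}, forcing a case split between even and odd $l$. I would handle each parity separately, substitute the explicit form of $\kappa_{n,l}$, and check by direct computation that the resulting closed-form expression for $\eta_{n,l}$ matches \eqref{eq:etaquasi}. The binomial factor $\binom{n}{l+2}$ arises transparently from the expansion of $\mathcal{Z}_{(n)}$, while the parity-dependent piece in the parenthesis of \eqref{eq:etaquasi} can be traced to the recursive pattern between $\kappa_{n,l}$ and $\kappa_{n,l+2}$. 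Once this identity is verified, Proposition \ref{prop:cosmocond} immediately promotes $\mathsf{Q}_{(n)}$ to a Cosmological Gravity, and combined with the Quasitopological check above concludes the theorem.

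In summary, the strategy is: (i) use the triviality of $\mathcal{T}_{(l+2)}^{\mathcal{Z}}$ on \eqref{eq:sss} to preserve the Quasitopological property, (ii) evaluate the Lagrangian on the FLRW ansatz \eqref{eq:flrwans} using \eqref{eq:qquasicond}, (iii) impose \eqref{eq:cosmocondquasi} to extract a decoupled linear system for the $\eta_{n,l}$, and (iv) verify by direct computation, treating even and odd $l$ separately, that the prescribed values \eqref{eq:etaquasi} are the unique solutions. The parity bookkeeping is the only delicate part; all other steps are routine applications of the tools already developed in Sections \ref{sec:hogflrw} and \ref{sec:cosmograv}.
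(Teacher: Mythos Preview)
Your proposal is correct and follows essentially the same route as the paper: triviality of the $\mathcal{T}^{\mathcal{Z}}$ pieces on \eqref{eq:sss} preserves the Quasitopological property inherited from $\mathcal{Z}_{(n)}$, while imposing \eqref{eq:cosmocondquasi} on the FLRW evaluation \eqref{eq:qquasicond} yields a linear system whose unique solution is \eqref{eq:etaquasi}. The paper simply asserts the outcome of that computation, whereas you spell out the parity-split strategy in more detail; one minor caution is that the recursion coming from \eqref{eq:cosmocondquasi} is actually three-term in the coefficients $\beta_p$ of $\Sigma^{n-p}\Gamma^p$ (cf.\ \eqref{eq:sisconalfa}), so each monomial equation couples three consecutive coefficients rather than only $\kappa_{n,l}$ and $\kappa_{n,l+2}$ --- the system is still triangular once $\beta_0=1$, $\beta_1=0$ are used, but keep this in mind when executing the bookkeeping.
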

To the best of our knowledge, these represent the first instances ever of Cosmological Quasitopological Gravities, which will also satisfy a holographic $c$-theorem by construction. Given that Quasitopological Gravities form a very particular subset of the whole space of GQTGs in $D \geq 5$, it is remarkable that adding some trivial GQTGs to the theories $\mathcal{Z}_{(n)}$ given in Equation \eqref{eq:noncosmoQTG} produces Cosmological GQTGs at all curvature orders $n$ and dimensions $D \geq 5$. Indeed, one is tempted to talk about a \emph{Quasitopological miracle}. 

For low orders, the Cosmological GQTGs we have found in generic $D \geq 5$ read as follows:
\begin{align}
\mathsf{Q}_{(1)}&=R\,, \\
\mathsf{Q}_{(2)}&=R^2+ \frac{D(D-2)(D-1)^2}{4} W_{abcd} W^{abcd}- \frac{2D^2(D-1)}{D-2} Z_{ab} Z^{ab}\,, \\
\mathsf{Q}_{(3)}&=R^3+\frac{2 (D-1)^2 D^2 (2 D-3) W\indices{_a_b^c^d}W\indices{_c_d^e^f}W\indices{_e_f^a^b}}{(D-3) (D-2) (D ((D-9)
   D+26)-22)}+\frac{24 (D-1)^2 D^2 Z^a_b Z^c_d W\indices{_a_c^b^d}}{(D-3) (D-2)^3}\notag\\
   &+\frac{16 (D-1)^2
   D^2 Z^a_b Z^b_cZ_a^c}{(D-2)^4}-\frac{24 (D-1)^2 D^2 Z^a_b W_{a c d e}W^{bcde}}{(D-4) (D-3)
   (D-2)^2}+\frac{3 (D-1) D R W_{a b c d} W^{a b c d}}{(D-3) (D-2)}\notag\\
   &-\frac{12 (D-1) D R
   Z^a_b Z_a^b}{(D-2)^2}\, ,\\ \nonumber
   \mathsf{Q}_{(4)}&=R^4+\frac{3 (D-1)^2 D^3 (3 D-4) \left(W_{a b c d} W^{a b c d}\right)^2}{(D-3)^2
   (D-2)^4}- \frac{384 (D-1)^3 D^3 Z^{a}_b Z_{a c} Z_{d e} W^{bdce}}{(D-4)
   (D-3) (D-2)^4}\\\nonumber &+\frac{48 (D-1)^3 D^3
  \left(Z^a_b Z_a^b\right)^2}{(D-3) (D-2)^5}+\frac{8 (D-1)^2 D^2 (2
   D-3) R  W\indices{_a_b^c^d}W\indices{_c_d^e^f}W\indices{_e_f^a^b}}{(D-3) (D-2) (D ((D-9)
   D+26)-22)}\\\nonumber & -\frac{96 (D-1)^3 D^3
  Z^a_bZ^b_cZ^c_dZ^d_a}{(D-3) (D-2)^5} +\frac{96 (D-1)^2 D^2 R Z^a_b Z^c_d W\indices{_a_c^b^d}}{(D-3)
   (D-2)^3}+\frac{64 (D-1)^2 D^2 R
   Z^a_b Z^b_cZ_a^c}{(D-2)^4}\\\nonumber & -\frac{96 (D-1)^2 D^2 R
   Z^a_b W_{a c d e}W^{bcde}}{(D-4) (D-3) (D-2)^2}+\frac{24 (D-1)^2
   D^2 (D (7 D-10)+4) W_{a b c d} W^{a b c d}Z_{ef} Z^{ef}}{(D-3)
   (D-2)^5}\\\nonumber &-\frac{192 (D-1)^3 D^2 Z^{ab} W_{acbd} W^{c efg} W^d{}_{efg}}{(D-4)
   \left(D^2-5 D+6\right)^2}+\frac{6 (D-1) D R^2
 W_{a b c d} W^{a b c d}}{(D-3) (D-2)}\\&-\frac{24 (D-1) D R^2
   Z_{ab} Z^{ab}}{(D-2)^2}\,.
\end{align}
Higher-order Cosmological Quasitopological Gravities in generic dimensions $D \geq 5$ may be found in Appendix \ref{App:A}. 

In another vein, it is important to note that it is possible to produce Cosmological GQTGs by adding trivial GQTGs to combinations of proper GQTGs. To illustrate this feature, let us work momentarily in $D=5$ and consider the following proper quartic GQTG:
\begin{align}
\nonumber
\mathcal{S}_{(4)}^{(D=5)}&=R^4+360 R^2 \sW_2-126000 \sW_2+25600 R \sX_3+11200 R \sW_3-224000 \sX_4+\frac{40000}{3} R \sY_3\\&+ \frac{200000}{3} \sY_4-400 R^2 \sZ_2-56000 \sW_2 \sZ_2+\frac{640000}{27} \sZ_2^2-\frac{32000}{27} R \sZ_3\,.
\end{align}
If we now construct the theory:
\begin{equation}
\mathsf{Q}'_{(4)}=\mathcal{S}_{(4)}^{(D=5)}-\frac{128000}{81}\left (Z_{a}^bZ_{b}^c Z_c^d Z_d^a-\frac{7}{30} (Z_a^b Z_b^a)^2\right) \,,
\label{eq:propgqg5}
\end{equation}
one can readily check that this defines a Cosmological GQTG in $D=5$ which is not of the quasitopological type. Therefore, proper GQTGs may also be used to obtain Cosmological GQTGs. All these results show that the complete characterization of the space of Cosmological GQTGs in $D \geq 5$ is a quite rich and interesting problem to be explored in the future.
\subsection{Four-dimensional Cosmological GQTGs}


Let us now explore the problem of finding Cosmological GQTGs in $D=4$. The existence of such theories has been established up to eighth order in the curvature \cite{Arciniega:2018fxj,Cisterna:2018tgx,Arciniega:2018tnn}. Our goal is extend this result to all curvature orders $n$ by finding an example of a four-dimensional Cosmological GQTG for every $n$. 

To this aim, let us start by presenting an instance of a (proper) GQTG in $D=4$ at any curvature order $n \geq 1$ \cite{Moreno:2023rfl}:
\begin{align}
\mathcal{S}_{(1)}&=R\,, \\
\mathcal{S}_{(2)}&=R^2+6 W_{abcd} W^{abcd}-12 Z_{ab}Z^{ab}\,, \\
\nonumber
\mathcal{S}_{(n)}&=R^n-6n(n-1) R^{n-2}Z_2+18n(n-1)(n-2) R^{n-3}W_{abcd} Z^{ac} Z^{bd} \\&+ \sum_{l=0}^{n-2} \lambda^{(1)}_{l} R^{n-l-4} \mathcal{W}_l \left ( R^2+\lambda_	l^{(2)} Z_{2} \right)\,, \quad n \geq 3\,,
\end{align}
where $\mathcal{S}_{(2)}$ coincides with the Gauss-Bonnet invariant and
\begin{align}
Z_2&=Z_{ab}Z^{ab}\, , \quad Z_3=Z_{ab}Z^{b}{}_c Z^{ca}\,, \\
\mathcal{W}_l&=\left (W_{mnop} W^{mnop} \right)^{\frac{l-\pi_l}{2}}((1-\pi_l) W_{abcd} W^{abcd}+2 \pi_l  W_{abcd} W^{cdef}W_{ef}{}^{ab})\,, \\
\lambda^{(1)}_{l}&=\frac{(-1)^l 3^{l/2+\pi_l/2+1}(l+1)(3l+4)n!}{2(l+2)!(n-l-2)!}\,, \quad \lambda^{(2)}_{l}=-12\frac{(n-l-2)(n-l-3)}{(l+1)(3l+4)}\,.
\end{align}

The strategy to find instances of four-dimensional Cosmological GQTGs at arbitrary curvature order is the same as in previous section: add appropriate trivial GQTGs which do not vanish on FLRW background and allow us to satisfy the cosmological condition, in conjunction with the GQTG definition. To this aim, let us consider the following combination of curvature invariants:
\begin{align}
\label{eq:trivialgqtg43}
\mathcal{T}^{\mathcal{S}}_{(3)}&=144 Z_3\,, \quad \\
\mathcal{T}^{\mathcal{S}}_{(n)}&=\frac{8\omega_{n,3}}{3}  R^{n-3} Z_3 + \sum_{l=4}^n \omega_{n,l} R^{n-l} \left (Z_2 \right)^{\frac{l-4-\pi_l}{2}} \left ((1-\pi_l)\left (Z_2^2 -4 Z_4\right )  +\frac{8 \pi_l}{3}  Z_2 Z_3 \right)\,, \quad n \geq 4\,,
\label{eq:trivialgqtg4}
\end{align}
where  
\begin{equation}
Z_4=Z_{ab}Z^{b}{}_c Z^{c}{}_d Z^{da} \,, \quad \omega_{n,l}= 2^{l-2-\pi_l} 3^{(2+l+\pi_l)/2}   \binom{n}{l} (l-1) \,.
\end{equation}
By direct computation one checks that $\mathcal{T}^{\mathcal{S}}_n$ is a trivial GQTG for every $n \geq 3$. Also, one may readily see that $\mathcal{S}_{(n)}+ \mathcal{T}^{\mathcal{S}}_n$ satisfies Equation \eqref{eq:cosmoteo} for every $n$, so it also belongs the cosmological class. Thus we prove the following theorem.
\begin{theorem}
\label{teo:cgqtgsallorders4}
For any curvature order $n$, consider the following higher-curvature gravity:
\begin{align}
\mathcal{Q}_{(1)}&=R\,, \\
\mathcal{Q}_{(2)}&=R^2+6 W_{abcd} W^{abcd}-12 Z_{ab}Z^{ab}\,, \\ \label{eq:cosmogqgteo}
\mathcal{Q}_{(n)}&=\mathcal{S}_{(n)}+ \mathcal{T}^{\mathcal{S}}_{(n)}\,, \quad n \geq 3
\end{align}
with $\mathcal{T}^{\mathcal{S}}_{(n)}$ given by Equations  \eqref{eq:trivialgqtg43} and \eqref{eq:trivialgqtg4}. It defines an example of Cosmological GQTG at all curvature orders.
\end{theorem}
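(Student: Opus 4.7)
The plan is to verify that $\mathcal{Q}_{(n)}$ satisfies the two defining properties of a Cosmological GQTG: second-order equations of motion for $f(r)$ in the SSS ansatz \eqref{eq:sss}, and second-order equations for the scale factor $a(t)$ in the comoving FLRW ansatz \eqref{eq:flrwansf1}. The cases $n=1,2$ are immediate, since $\mathcal{Q}_{(1)}=R$ is Einstein gravity while $\mathcal{Q}_{(2)}$ is the four-dimensional Gauss--Bonnet density, which is topological. The substance of the argument thus concerns $n \geq 3$, where $\mathcal{Q}_{(n)} = \mathcal{S}_{(n)} + \mathcal{T}^{\mathcal{S}}_{(n)}$.

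For the GQTG property I would rely on the fact, established in Reference \cite{Moreno:2023rfl}, that $\mathcal{S}_{(n)}$ is a proper GQTG, and then show that $\mathcal{T}^{\mathcal{S}}_{(n)}$ is a trivial GQTG, i.e.\ that it vanishes identically on \eqref{eq:sss}. The key observation is that for the single-function SSS metric in $D=4$ one has $R^t{}_t = R^r{}_r$ and $R^\theta{}_\theta = R^\phi{}_\phi$, so the eigenvalues of $Z^a{}_b$ form the pattern $(\zeta,\zeta,-\zeta,-\zeta)$ for some scalar $\zeta(r)$. This immediately yields
\begin{equation*}
\left. Z_3 \right\vert_{\diff s^2_f} = 0\,, \quad \left. (Z_2^2 - 4 Z_4) \right\vert_{\diff s^2_f} = 0\,, \quad \left. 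Z_2 Z_3 \right\vert_{\diff s^2_f} = 0\,,
\end{equation*}
since $\tr(Z^{2k+1})=0$ and $\tr(Z^{2k})=4\zeta^{2k}$ in this ansatz. Inspection of \eqref{eq:trivialgqtg43}--\eqref{eq:trivialgqtg4} shows that every summand of $\mathcal{T}^{\mathcal{S}}_{(n)}$ is a product of $R$-powers and one of these three vanishing blocks, so $\mathcal{T}^{\mathcal{S}}_{(n)}\vert_{\diff s^2_f} = 0$ and $\mathcal{Q}_{(n)}$ inherits the GQTG property from $\mathcal{S}_{(n)}$.

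For the Cosmological Gravity property I would apply the criterion of Proposition \ref{prop:cosmocond}: it suffices to verify $\partial^2 L_{a,1}/\partial \ddot{a}^2 = 0$ for the reduced Lagrangian of $\mathcal{Q}_{(n)}$. Using Proposition \ref{prop:noWeyl} together with \eqref{eq:kscosmo}, both $\mathcal{S}_{(n)}\vert_{\diff s^2_{a,1}}$ and $\mathcal{T}^{\mathcal{S}}_{(n)}\vert_{\diff s^2_{a,1}}$ become polynomials in $\Sigma$ and $\Gamma$, with each $Z_2$ contributing a factor $\propto \Gamma^2$ and each $Z_3$ a factor $\propto \Gamma^3$. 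Collecting the coefficients $\beta_p$ of $\Gamma^p \Sigma^{n-p}$ in $\mathcal{Q}_{(n)}\vert_{\diff s^2_{a,1}}$, the task reduces to checking the linear recursion \eqref{eq:sisconalfa}. Equivalently, by the uniqueness statement embedded in Theorem \ref{theo:cosmogenteorias}, it is enough to show that the resulting polynomial is proportional to $\mathcal{C}^{(n)}\vert_{\diff s^2_{a,1}}$ computed from \eqref{eq:cosmolag}.

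The main obstacle is this combinatorial step. The coefficients $\omega_{n,l}$ have been engineered so that the monomials $\Gamma^p \Sigma^{n-p}$ with $p \geq 3$, which $\mathcal{S}_{(n)}$ alone produces with non-cosmological weights, are corrected into the unique cosmological combination. To close the argument I would split the sum over $l$ in \eqref{eq:trivialgqtg4} according to the parity factor $\pi_l$, evaluate the contribution of each class to a fixed $\beta_p$, and verify \eqref{eq:sisconalfa} for all $p=0,\ldots,n-2$ by a direct binomial manipulation using the explicit expression $\omega_{n,l} = 2^{l-2-\pi_l} 3^{(2+l+\pi_l)/2} \binom{n}{l}(l-1)$. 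The already-known four-dimensional Cosmological GQTGs up to eighth order \cite{Arciniega:2018fxj,Cisterna:2018tgx,Arciniega:2018tnn} then provide independent sanity checks of the closed-form answer.
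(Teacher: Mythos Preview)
Your proposal is correct and follows essentially the same two-step route as the paper: first show that $\mathcal{T}^{\mathcal{S}}_{(n)}$ is a trivial GQTG, then verify the cosmological condition \eqref{eq:cosmoteo} for $\mathcal{S}_{(n)}+\mathcal{T}^{\mathcal{S}}_{(n)}$. The paper dispatches both steps by appeal to direct computation, whereas you supply a conceptual justification for the first one: the eigenvalue pattern $(\zeta,\zeta,-\zeta,-\zeta)$ of $Z^a{}_b$ on the single-function SSS ansatz forces $Z_3$, $Z_2^2-4Z_4$ and $Z_2 Z_3$ to vanish, which is exactly the structure of the building blocks in \eqref{eq:trivialgqtg43}--\eqref{eq:trivialgqtg4}. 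This is a genuine improvement in clarity over the paper's ``by direct computation one checks.'' For the cosmological half, your reduction to the recursion \eqref{eq:sisconalfa} (equivalently, proportionality to $\mathcal{C}^{(n)}\vert_{\diff s^2_{a,1}}$) is the right framing; the paper likewise does not display the binomial manipulation and simply asserts that the condition holds, so your level of detail here matches theirs.
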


Consequently, we have managed to extend the existence of Cosmological GQTGs for all curvature orders. Equivalently, we may claim to have found instances of GQTGs of arbitrary order which furthermore satisfy a holographic $c$-theorem. 

Expression \eqref{eq:cosmogqgteo} provides the following Lagrangians at the lowest curvature orders:
\begin{align}
\mathcal{Q}_{(3)}&=R^3+18 R W_{a b c d} W^{a b c d}-36 R Z^a_b Z_a^b-126 W\indices{_a_b^c^d}W\indices{_c_d^e^f}W\indices{_e_f^a^b}+108 Z^a_b Z^c_d W\indices{_a_c^b^d}\notag\\
&+144 Z^a_b Z^b_cZ_a^c\, ,\\
\mathcal{Q}_{(4)}&=R^4+36 R^2 W_{a b c d} W^{a b c d}-72 R^2 Z^a_b Z_a^b-504 R W\indices{_a_b^c^d}W\indices{_c_d^e^f}W\indices{_e_f^a^b}+432 R Z^a_b Z^c_d W\indices{_a_c^b^d}\notag\\
&+576 R Z^a_b Z^b_cZ_a^c+135 \left(W_{a b c d} W^{a b c d}\right)^2-216 W_{a b c d} W^{a b c d}Z^e_f Z_e^f+324 \left(Z^a_b Z_a^b\right)^2\notag\\
&-1296Z^a_bZ^b_cZ^c_dZ^d_a\, ,\\
\mathcal{Q}_{(5)}&=R^5+60 R^3 W_{a b c d} W^{a b c d}-120 R^3 Z^a_b Z_a^b-1260 R^2 W\indices{_a_b^c^d}W\indices{_c_d^e^f}W\indices{_e_f^a^b}\notag\\
&+1080 R^2 Z^a_b Z^c_d W\indices{_a_c^b^d}+1440 R^2 Z^a_b Z^b_cZ_a^c+675 R \left(W_{a b c d} W^{a b c d}\right)^2+1620 R \left(Z^a_b Z_a^b\right)^2\notag\\
&-1080 R W_{a b c d} W^{a b c d}Z^e_f Z_e^f-6480 R
   Z^a_bZ^b_cZ^c_dZ^d_a-1404 W_{a b c d} W^{a b c d}W\indices{_e_f^g^h}W\indices{_g_h^i^j}W\indices{_i_j^e^f}\notag\\
   &+2160 W\indices{_a_b^c^d}W\indices{_c_d^e^f}W\indices{_e_f^a^b}Z^g_hZ^h_g+3456 Z^a_b Z_a^bZ^c_d Z^d_eZ_c^e\, ,
\end{align}
In the appendix we present additional expressions for $\mathcal{Q}_{(n)}$ up to $n=9$. 

Let us now relate the cubic and quartic theories we have obtained to other theories that have already appeared in the literature. On the one hand, the cubic Lagrangian $\mathcal{Q}_{(3)}$ turns out to be related to the theory presented in Equation \eqref{eq:AEJ} through the expression
\begin{equation}\label{eq:rel3}
\frac{1}{9}\mathcal{Q}_{(3)}=-2(\mathcal{P}-8\hat{\mathcal{P}})\,.
\end{equation}
As a matter of fact, up to constant factors, $\mathcal{Q}_{(3)}$ represents the unique combination of cubic invariants which forms a Cosmological GQTG. 

On the other hand, Cosmological GQTGs of quartic order have been explored in Reference \cite{Cano:2020oaa}. In their writing, they list three examples of quartic Cosmological GQTGs: $\mathcal{R}_{(4)}^A$, $\mathcal{R}_{(4)}^B$ and $\mathcal{R}_{(4)}^C$ (See Equations (5-7) in Reference \cite{Cano:2020oaa}). The relation between their theories and our $\mathcal{Q}_{(4)}$ is given by
\begin{align}
\label{eq:pabloynuestra}
\mathcal{Q}_{(4)}&=108\left(-32\mathcal{R}_{(4)}^A+16\mathcal{R}_{(4)}^B-7\mathcal{R}_{(4)}^C\right)-432\mathcal{T}_{(4)}'\, , \\
\label{eq:trivialnuestra}
\mathcal{T}_{(4)}'&=Z^{af}Z_f^bW_{acde}W_{b}{}^{cde}-3Z^{ab}Z^{cd}W_{ac}{}^{ef}W_{efbd}\,,
\end{align}
$\mathcal{T}_{(4)}'$ being a trivial GQTG of quartic order.




\section{Cosmological Perturbations in Cosmological Gravities}
\label{sec:cosmopert}
In this final section of the manuscript we devote ourselves to the study of cosmological perturbations in the context of Cosmological Gravities. As it is well known, cosmological perturbations may be divided into scalar, vector and tensor perturbations. On the one hand, vector perturbations can be argued to be negligible in an expanding Universe --- such as ours ---, while tensor perturbations are associated to gravitational waves \cite{baumann2022cosmology}. On the other hand, scalar perturbations couple to both density and pressure perturbations and are responsible for the formation of structures in the Universe, so we decide to focus on them.

In the conformal-Newtonian gauge, scalar perturbations of FLRW backgrounds may be encompassed in two scalar functions $\Phi(\eta,x^i)$ and $\Psi(\eta,x^i)$ which modify the FLRW ansatz as follows:
\begin{equation}
\diff s^2=a^2(\eta)\left[ -(1+2\Phi) \diff \eta^2+(1-2\Psi) \gamma_{ij} \diff x^i \diff x^j \right]\,,
\label{eq:cosmopert}
\end{equation}
where $\eta$ stands for the conformal time\footnote{In absence of perturbations, a simple time reparametrization allows us to change among the conformal time $\eta$ and the usual comoving time $t$ we have been using along the document} and $\gamma_{ij} \diff x^i \diff x^j$ stands either for the metric of the $(D-1)$-dimensional round sphere, hyperbolic space or plane space. 

There is a crucial property satisfied by all Cosmological Gravities in all dimensions $D \geq 3$. Not only they possess second-order equations of motion for the scale factor $a(t)$ on FLRW backgrounds, but also the equations of motion for the scalar perturbations $\Phi$ and $\Psi$ are strictly of second-order in time derivatives. This feature was already observed to hold for cubic Cosmological Gravities in $D=4$ in Reference \cite{Cisterna:2018tgx} and for certain quartic Cosmological Gravities in Reference \cite{Cano:2020oaa}, and now we proceed to prove it in complete generality.
\begin{theorem}
\label{theo:cosmo2pert}
Let $\mathcal{L}(g^{ab},R_{abcd})$ be any Cosmological Gravity. Then, the linear equations of motion for cosmological scalar perturbations of the form \eqref{eq:cosmopert} are of second order in time derivatives.
\end{theorem}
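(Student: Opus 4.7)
The plan is to combine the decomposition of curvature invariants into Weyl $W_{abcd}$, traceless Ricci $Z_{ab}$, and Ricci scalar $R$ parts (Proposition \ref{prop:noWeyl}) with the cosmological condition of Proposition \ref{prop:cosmocond}. Since $W_{abcd}$ vanishes on any FLRW background, the quadratic action for the scalar perturbations $(\Phi,\Psi)$ splits naturally into a pure $R,Z$ sector and a Weyl sector containing one or two Weyl factors, while invariants with three or more Weyl factors do not contribute at quadratic order. The dangerous terms which would spoil second-order-in-time equations of motion are precisely those of the form $(\ddot\Phi)^2$, $\ddot\Phi\ddot\Psi$, $(\ddot\Psi)^2$ in the quadratic Lagrangian (other bilinears with up to four total derivatives reduce, under integration by parts, to terms with no more than two time derivatives on each field), so it suffices to show that all such coefficients vanish.

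The first step is a key lemma: the linearized Weyl tensor $W^{(1)}_{abcd}$ of the perturbation \eqref{eq:cosmopert} contains only spatial derivatives of $\Phi$ and $\Psi$. Since $W^{a}{}_{bcd}$ is conformally invariant, one may equivalently compute it on the conformally related perturbed Minkowski space with $h_{00}=-2\Phi$, $h_{ij}=-2\Psi\,\delta_{ij}$. A direct calculation shows that the $\ddot\Psi\,\delta_{ij}$ piece of $R^{(1)}_{0i0j}$ is canceled exactly against the trace contributions from $R^{(1)}_{ij}$, $R^{(1)}_{00}$ and $R^{(1)}$, giving $W^{(1)}_{0i0j}\propto\bigl(\partial_i\partial_j-\tfrac{1}{D-1}\delta_{ij}\nabla^2\bigr)(\Phi+\Psi)$, together with $W^{(1)}_{0ijk}=0$ and a $W^{(1)}_{ijkl}$ carrying only spatial derivatives of $\Psi$. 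As a consequence, every Weyl-containing piece of the quadratic action has its $\delta W$ factors carrying only spatial derivatives; the time derivatives come exclusively from the accompanying $\delta R$ and $\delta Z$ factors, which contain at most two time derivatives of the perturbations, and after variation yield at most second-time-derivative contributions to the equations of motion. The remaining $\delta^2 W$ pieces have the schematic structure $\delta g\cdot\partial^2\delta g$ or $(\partial\delta g)^2$ (since $W$ is linear in $\partial^2 g$), so they also give at most second-time-derivative contributions.

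The second step handles the pure $R,Z$ sector by extending the reduced-Lagrangian formalism of Propositions \ref{prop:eomflrw} and \ref{prop:cosmocond} to the perturbed background. Viewing $\Psi$ and $\Phi$ as local, spatially-dependent perturbations of the scale factor $a$ and the lapse $F$ respectively, the $(\ddot\Psi)^2$ coefficient in the quadratic Lagrangian arises from $\tfrac{1}{2}(\partial^2 L_{a,F}/\partial\ddot a^2)(\delta\ddot a)^2$ and therefore vanishes by the cosmological condition \eqref{eq:cosmoteo}; the identities $\partial_{\dot F}\Sigma=-(\dot a/2F)\partial_{\ddot a}\Sigma$ and the analogous one for $\Gamma$ used in the proof of Proposition \ref{prop:cosmocond} propagate the same vanishing to the $(\ddot\Phi)^2$ and $\ddot\Phi\ddot\Psi$ coefficients. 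The main obstacle is to verify that the genuinely spatial-dependent contributions to the quadratic Lagrangian---which sit outside the strictly homogeneous ansatz covered by Proposition \ref{prop:cosmocond}---do not reintroduce higher-than-second time derivatives; concretely, one must check that structures such as $(\partial_i\ddot\Psi)^2$ or $\partial_i\ddot\Psi\,\partial_j\ddot\Phi$ do not appear at the relevant order. This is controlled by observing that each curvature factor carries at most two metric derivatives, so any bilinear in $\delta R$, $\delta Z$ has at most four derivatives in total, and by checking that $\delta R$ and $\delta Z_{ab}$ contain derivatives of the perturbations only in the pure combinations $\ddot\Phi,\ddot\Psi,\nabla^2\Phi,\nabla^2\Psi,\partial_i\dot\Psi$ (no mixed $\partial_i\ddot\Psi$ pieces), so the bilinears of Proposition \ref{prop:noWeyl} type give cross-terms like $\ddot\Psi\,\nabla^2\Psi$ or $\nabla^2\Phi\,\ddot\Psi$ that produce at most second-order-in-time equations. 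Combining the two sectors one concludes that the linearized equations of motion for $\Phi$ and $\Psi$ contain no more than two time derivatives.
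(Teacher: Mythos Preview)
Your approach differs genuinely from the paper's. The paper works directly at the level of the linearized field equations $\mathcal{E}^{(1)}_{ab}$, isolating the higher-derivative source $\nabla^c\nabla^d P_{acbd}$ and tracking where $\partial_\eta\partial_i\Psi$ can sit inside $P_{abcd}$; the crucial step is that only $Z^{(1)}_{\eta i}$ carries $\partial_\eta\partial_i\Psi$, and the relevant contraction $(\diff Z)_{\eta i}{}^{a\eta b\eta}$ vanishes identically. You instead organize the argument at the quadratic-action level, splitting into a Weyl sector (controlled by your lemma that $W^{(1)}_{abcd}$ carries only spatial derivatives of $\Phi,\Psi$ --- a correct and elegant consequence of the conformal invariance of $W^{a}{}_{bcd}$) and a pure $R,Z$ sector (controlled by the cosmological condition via the homogeneous restriction). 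Your Weyl lemma is in fact sharper than what the paper actually uses; the paper only needs $\partial W^{(1)}/\partial(\partial_\eta\partial_i\Phi)=\partial W^{(1)}/\partial(\partial_\eta\partial_i\Psi)=0$.

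There is, however, a genuine gap. Varying the gauge-fixed quadratic action with respect to $\Phi$ and $\Psi$ yields only two of the four independent linearized equations, namely $\mathcal{E}^{(1)}_{\eta\eta}$ and the spatial trace of $\mathcal{E}^{(1)}_{ij}$. The momentum constraint $\mathcal{E}^{(1)}_{\eta i}$ and the anisotropic-stress (traceless) part of $\mathcal{E}^{(1)}_{ij}$ --- which the paper explicitly computes in Section~\ref{sec:cosmopert} and which are also claimed to be second order in $\partial_\eta$ --- are \emph{not} obtained by varying $(\Phi,\Psi)$ alone, since $h_{\eta i}$ and the traceless part of $h_{ij}$ are set to zero in the Newtonian gauge before variation. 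The contracted Bianchi identity does not rescue you here: it bounds $\partial_\eta\mathcal{E}_{\eta\eta}$ in terms of $\partial_i\mathcal{E}_{i\eta}$, not the other way around, so knowing that $\mathcal{E}^{(1)}_{\eta\eta}$ and $\mathrm{tr}\,\mathcal{E}^{(1)}_{ij}$ have at most two time derivatives does not by itself constrain the order of $\mathcal{E}^{(1)}_{\eta i}$. To close the argument you would need either to carry your quadratic-action analysis with the full un-gauge-fixed perturbation $h_{ab}$ (and then project onto scalars), or to switch to the EoM level as the paper does. A minor side remark: you list $\ddot\Phi$ among the second-derivative structures appearing in $\delta R,\delta Z_{ab}$, but in fact $\ddot\Phi$ cancels identically in every linearized curvature component (consistent with $\ddot F$ never appearing in $\Sigma,\Gamma$), so the $(\ddot\Phi)^2$ and $\ddot\Phi\ddot\Psi$ coefficients vanish trivially and do not require the cosmological condition.
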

\begin{proof}
The general gravitational equations of motion for any higher-curvature gravity  $\mathcal{L}(g^{ab},R_{abcd})$ take the form \cite{Padmanabhan:2013xyr}:
\begin{equation}
\mathcal{E}_{ab}=P_{acde} R_b{}^{cde}-\frac{1}{2} \mathcal{L} g_{ab}+2 \nabla^c \nabla^d P_{acbd}=0\,,
\label{eq:geneom}
\end{equation}
where $P^{abcd}=\frac{\partial \mathcal{L}}{\partial R_{abcd}}$. From here, it is clear that all derivatives of metric components of higher-degree than two arise from the piece $\nabla^c \nabla^d P_{acbd}$. 

Assume that $\mathcal{L}(g^{ab},R_{abcd})$ is a Cosmological Gravity and consider the perturbed FLRW metric \eqref{eq:cosmopert}. First, we observe that the linear equations of motion for $\Phi$ and $\Psi$ may not contain any term with four temporal derivatives of $\Phi$ or $\Psi$. Indeed, if $\Phi=\Phi(\eta)$ and $\Psi=\Psi(\eta)$ had only time dependence, relabel $\eta \rightarrow t$ and associate $a^2(\eta) (1+2\Phi(\eta))$ with $F(t)$ and $a^2(\eta) (1-2\Phi(\eta))$ with $a^2(t)$ in Equation \eqref{eq:flrwans}. Since $\mathcal{L}(g^{ab},R_{abcd})$ is a Cosmological Gravity, no time derivatives of order larger than two may appear in the subsequent equations of motion, and this guarantees the absence of terms with four temporal derivatives of $\Phi$ or $\Psi$. As a byproduct, we also note that no terms of the form $\partial_\eta^3  \Phi$ or $\partial_\eta^3  \Psi$ may appear.


Hence all potential terms with more than two temporal derivatives must be of the form $\partial_\eta^3 \partial_i \Phi$ or $\partial_\eta^3 \partial_i \Psi$. Such pieces may arise either from mixed derivatives $\partial_\eta \partial_i \Phi$ or $\partial_\eta \partial_i \Psi$ in the tensor $P^{abcd}$ or from second-order time derivatives $\partial_\eta^2 \Phi$ and  $\partial_\eta^2 \Psi$. In the latter case, the potentially higher-derivative terms would arise from $\nabla^i \nabla^\eta P_{a i b\eta}+ \nabla^\eta  \nabla^i P_{a\eta b i}$. By commuting partial derivatives with the introduction of appropriate contractions with Riemann curvature tensors, one observes that  all possible higher-derivative terms $\partial_\eta^3 \partial_i \Phi$ or $\partial_\eta^3 \partial_i \Psi$ could arise from $\nabla^\eta  \nabla^i P_{a\eta b i}$. Noting that, at leading order, those terms in $P_{a\eta b i}$ already  linear in the perturbations may be considered to be acted by covariant derivatives of the unperturbed FLRW space,  $\nabla^\eta  \nabla^i P_{a\eta b i}$ would generate terms  $\partial_\eta^3  \Phi$ or $\partial_\eta^3  \Psi$, which are forbidden according to the argumentation above. Therefore, all terms $\partial_\eta^3 \partial_i \Phi$ or $\partial_\eta^3 \partial_i \Psi$ may only arise from terms containing $\partial_\eta \partial_i \Phi$ or $\partial_\eta \partial_i \Psi$ in the tensor $P^{abcd}$ evaluated at linear order in the perturbations. 


Now, let $R^{(1)}$, $Z^{(1)}{}_{ab}$ and $W^{(1)}{}_{ab}{}^{cd}$ denote the Ricci scalar, traceless Ricci tensor and Weyl curvature tensor at first order in the perturbations. The following properties hold:
\begin{align}
\frac{\partial R^{(1)}}{\partial (\partial_\eta \partial_i \Phi) }&=0\,, \quad \frac{\partial Z^{(1)}{}_{ab}}{\partial (\partial_\eta \partial_i \Phi) }=0\,, \quad \frac{\partial W^{(1)}{}_{ab}{}^{cd}}{\partial (\partial_\eta \partial_i \Phi) }=0\,,  \\
\frac{\partial R^{(1)}}{\partial (\partial_\eta \partial_i \Psi) }&=0\,, \quad \frac{\partial Z^{(1)}{}_{ab}}{\partial (\partial_\eta \partial_i \Psi) }=\mathfrak{f}_1(\eta) \delta^\eta_{(a} \delta^i_{b)} \,, \quad \frac{\partial W^{(1)}{}_{ab}{}^{cd}}{\partial (\partial_\eta \partial_i \Psi) }=0\,,
\end{align}
where $\mathfrak{f}_1(\eta)$ is a certain function whose specific form is unimportant for the purposes of this proof. These properties ensure that we just need to care about potential terms with $\partial_\eta \partial_i \Psi$ and, furthermore, we know that they may only arise from terms containing the traceless Ricci tensor at linear order in the perturbations. 

Let us then consider all terms that may contain $\partial_\eta \partial_i \Psi$ at linear order in the perturbations of a generic tensor $P_{abcd}$ of any higher-curvature gravity. Note the following properties:
\begin{align}
\label{eq:cosmoscalp1}
\delta^\eta_{(a} \delta^i_{b)} (c_1 \delta_{\eta}^b \delta_\eta^c+c_2 g^{bc})&=\frac{c_1}{2}\delta^i_{a} \delta_\eta^c+\frac{c_2}{2} (\delta^\eta_{a} g^{ic}+g^{\eta c}  \delta^i_{a}) + \mathcal{O}(\Phi,\Psi)\,, \\
\label{eq:cosmoscalp2}
\delta^\eta_{(a} \delta^i_{b)} (c_1 \delta_{\eta}^b \delta_\eta^a+c_2 g^{ba})&=\mathcal{O}(\Phi,\Psi)\,,
\end{align}
for any $c_1,c_2 \in \mathbb{R}$. These aspects imply that any potential piece\footnote{When dealing with tensors which are already of first order in the perturbations, we may lower and rise indices with the zeroth-order FLRW metric.} $\mathcal{A}_{abcd}$ in $P_{abcd}$ containing $\partial_\eta \partial_i \Psi$ must have the following schematic form:
\begin{equation}
\mathcal{A}^{abcd}=\mathfrak{f}_2(\eta) \Sigma^m \Gamma^p \left ( \diff Z \right)_{\eta i}{}^{abcd} \partial_\eta \partial_i \Psi +\mathcal{O}(\Phi^2,\Psi^2, \Phi \Psi)\,, 
\end{equation}
where $\Gamma$ and $\Sigma$ were defined back in  Equations \eqref{eq:gammadef}, $\mathfrak{f}_2(\eta)$ is a certain function and \eqref{eq:sigmadef} and
\begin{equation}
\left ( \diff Z \right)_{ef}{}^{abcd}=\frac{\partial Z_{ef}}{\partial R_{abcd}}=\delta_{(e}^{[b} g^{a][c} \delta_{f)}^{d]}-\frac{1}{D} g_{ef} g^{c[a} g^{b]d}\,.
\end{equation}
Terms arising from $\frac{\partial R}{\partial R_{abcd}}$ and $\frac{\partial W_{ijkl}}{\partial R_{abcd}}$ may be seen not to contribute to pieces with $\partial_\eta \partial_i \Psi$ from the properties of the traceless Ricci tensor of the unperturbed FLRW spacetime and Equations \eqref{eq:cosmoscalp1} and \eqref{eq:cosmoscalp2}.

Now, when we consider $\nabla^c \nabla^d \mathcal{A}_{acbd}$, the pieces with $\partial_\eta^3 \partial_i \Psi$ must arise from the covariant derivatives with $c=\eta$ and $d=\eta$ acting on $\partial_\eta \partial_i \Psi$  necessarily. However, from the properties of the unperturbed FLRW metric, it turns out that:
\begin{equation}
\left ( \diff Z \right)_{\eta i}{}^{a \eta b \eta}=\mathcal{O}(\Phi,\Psi)\,.
\end{equation}
Consequently, no third time derivatives of $\Phi$ or $\Psi$ may appear in the subsequent equations for the linear perturbations of Cosmological Gravities and we conclude. 
\end{proof}

Therefore, Cosmological Gravities do also yield equations of motion of second order in time derivatives for scalar cosmological perturbations. However, this does not exclude the presence of higher-derivative terms in the linear equations of motion for the perturbations, such as pieces affected by two time derivatives and two spatial derivatives. After all, Cosmological higher-curvature Gravities do not possess equations of motion of second-order for generic backgrounds, so one should expect the presence of such higher-derivative terms in the linear equations of motion for the perturbations\footnote{As noted in \cite{Pookkillath:2020iqq,BeltranJimenez:2020lee}, theories featuring a reduction of order of the equations of motion on certain highly symmetric configurations are, still, typically affected by instabilities associated to the so-called strong coupling problem. These issues disappear if we regard our models as effective gravitational theories \cite{Cano:2020oaa,Bueno:2023jtc,Jimenez:2023esa}, which is the point of view emphasized in the Introduction. In particular, computations and physical discussions resulting from these higher-curvature gravities would be perfectly valid as long as we retain ourselves within the small-coupling regime.}. In fact, we are going to explicitly check this by computing the equations of motion for the scalar perturbations for our four-dimensional Cosmological GQTGs derived in Equation \eqref{eq:cosmogqgteo}. This shall be the goal of next subsection.  

Nonetheless, before closing this part, there are two additional aspects that deserve to be mentioned. First, note that Cosmological Gravities which are identically vanishing on top of FLRW backgrounds might contribute to the equations for scalar perturbations, so one should pay attention to such type of terms. Second, one may wonder whether Cosmological Gravities do also produce linear equations of motion of second order in time derivatives for \emph{all} cosmological perturbations, including tensor and vector modes. However, direct check of the (trivial) Cosmological Gravity $W_{abcd} W^{abcd}$ shows that the equations of motion for tensor perturbations in the usual transverse and traceless gauge are of fourth order in time derivatives, thus showing that not all Cosmological Gravities may provide equations of second order in time derivatives for all perturbation modes. 

\subsection{Cosmological Perturbations of four-dimensional Cosmological GQTGs}

Once proven that every Cosmological Gravity features equations of motion of second order in time derivatives for scalar cosmological perturbations, we will present such equations for the four-dimensional Cosmological GQTGs we have obtained in Theorem \ref{teo:cgqtgsallorders4}. For the sake of simplicity, we will restrict ourselves to the case of transverse planar spatial 3-sections, so that $\gamma_{ij}$ can be taken to be Euclidean metric in Cartesian coordinates.

We begin by computing the tensor $P_{abcd}$ associated to the $n$-th order Cosmological GQTG $\mathcal{Q}_{(n)}$ presented in Equation \eqref{eq:cosmogqgteo}. It is  given by:
\begin{align}
P\indices{^a^b_c_d}&=nR^{n-1}\delta_{\phantom{[}c}^{[a}\delta_{d\phantom{]}}^{b]}-6n(n-1)\left((n-2)R^{n-3}\delta_{\phantom{[}c}^{[a}\delta_{d\phantom{]}}^{b]}Z_2+2R^{n-2}Z_{[c}^{[a}\delta_{d]}^{b]}\right)\notag\\
&+18n(n-1)(n-2)R^{n-3}\left(Z_{\phantom{[}c}^{[a}Z_{d\phantom{]}}^{b]}+\delta_{[c}^{[a}Z^{b]e}Z_{d] e}-\frac{1}{6}\delta_{\phantom{[}c}^{[a}\delta_{d\phantom{]}}^{b]}Z_2+2\delta_{[c}^{[a}W\indices{^{b]}^f_{d]}_e}Z_{\vphantom{[}f}^{\vphantom{[}e}\right)\notag\\
&+24n(n-1)(n-2)\left((n-3)R^{n-4}Z_3\delta_{\phantom{[}c}^{[a}\delta_{d\phantom{]}}^{b]}+3R^{n-3}\left(\delta_{[c}^{[a}Z^{b]e}Z_{d] e}^{\phantom{]}}-\frac{1}{4}\delta_{\phantom{[}c}^{[a}\delta_{d\phantom{]}}^{b]}Z_2\right)\right)\notag\\
&+6n(n-1)R^{n-4}W\indices{^a^b_c_d}\left(R^{2}-3(n-2)(n-3) Z_{2}\right)+\frac{1}{3}\sum_{l=4}^{n}\omega_{n,l}R^{n-l-1}Z_2^{\frac{l-6-\pi_l}{2}}\times\notag\\
&\Big(\left((n-l)\delta_{\phantom{[}c}^{[a}\delta_{d\phantom{]}}^{b]}Z_2+(l-4-\pi_l)RZ_{[c}^{[a}\delta_{d]}^{b]}\right)\left((1-\pi_l)\left(3Z_2^2-12Z_4\right)+8\pi_lZ_2Z_3\right)\notag\\
& +R Z_2\left(12(1-\pi_l)\left(\delta_{\phantom{[}c}^{[a}\delta_{d\phantom{]}}^{b]}Z_3+Z_{[c}^{[a}\delta_{d]}^{b]}Z_2-4\delta_{[c}^{[a}Z^{b]e}_{\phantom{]}}Z_{d]}^{f\phantom{]}}Z^{\phantom{]}}_{ef}\right.\right)\notag\\
&+2\pi_l\left((12\delta_{[c}^{[a}Z^{b]e}Z_{d] e}^{\phantom{]}}-3\delta_{\phantom{[}c}^{[a}\delta_{d\phantom{]}}^{b]}Z_2)Z_2+8Z_{[c}^{[a}\delta_{d]}^{b]}Z_3\right)\Big)\Big)+ \mathcal{O}(\Phi^2,\Psi^2, \Phi \Psi)\,,
\end{align}
where $\mathcal{O}(\Phi^2,\Psi^2, \Phi \Psi)$ stands for curvature invariants that become second-order in the perturbations if evaluated on \eqref{eq:cosmopert}. Notice that in this expression all terms beyond linear order in the Weyl tensor are $\mathcal{O}(\Phi^2,\Psi^2, \Phi \Psi)$. When evaluated on Equation  \eqref{eq:cosmopert}, one may explicitly check (by direct computation) that $\nabla^c \nabla^d P_{acbd}$ carries at most second-order temporal derivatives in the scalar degrees of freedom $\Phi$ and $\Psi$, as guaranteed by Theorem \ref{theo:cosmo2pert}.

Now, write the gravitational equations of motion in the form
\begin{equation}
\mathcal{E}_{ab}=8 \pi G T_{ab}\,,
\end{equation}
where $T_{ab}$ stands for the stress-energy tensor, which one could take to be that of a perfect fluid, if desired. Let $\mathcal{E}_{ab}^{(n)}$ denote the contribution to $\mathcal{E}_{ab}$ arising from the $n$-th order Cosmological GQTG $\mathcal{Q}_{(n)}$. Define the Hubble factor $H$ in terms of the comoving time $t$, $a H= \dot{a}$. In terms of the conformal time $\eta$, we trivially have $a^2 H=a'$, where \emph{prime} denotes the partial derivative with respect to conformal time $\eta$. Write down the perturbations in Fourier space:
\begin{equation}
\Psi=\int\diff^3k\, \Psi_{\mathbf{k}}\ex{\iu \mathbf{k}\cdot\mathbf{x}}\, ,\quad \Phi=\int\diff^3k\, \Phi_{\mathbf{k}}\ex{\iu \mathbf{k} \cdot\mathbf{x}}\,,
\end{equation}
for three-momentum $\mathbf{k}=(k_x,k_y,k_z)$. Let $\tilde{\E}_{ab}$ and $\tilde{\E}^{(n)}_{ab}$ be the subsequent equations of motion in Fourier space: 
\begin{equation}
\tilde{\E}_{ab}=\frac{1}{(2\pi)^3} \int \diff^3 x \, \E_{ab}\, \ex{- \iu \mathbf{k}\cdot\mathbf{x}}  \,,\quad  \tilde{\E}_{ab}^{(n)}=\frac{1}{(2\pi)^3} \int \diff^3 x \, \E_{ab}^{(n)}\, \ex{- \iu \mathbf{k}\cdot\mathbf{x}}  \,,
\end{equation}
Inspired by previous work \cite{Cano:2020oaa}, we decompose $\tilde{\E}^{(n)}_{ab}$ as follows:
\begin{equation}
\tilde{\E}_{\eta\eta}^{(n)}\, ,\quad \tilde{\E}_{\eta i}^{(n)}=\iu k_i A^{(n)}\, ,\quad  \tilde{\E}_{ij}^{(n)}=k_ik_j B^{(n)}+C^{(n)}\gamma_{ij}\, ,
\end{equation}
where $i$ and $j$ refer to the spatial indices. Then, we obtain the following expressions for $\tilde{\E}_{\eta\eta}^{(n)}$, $A^{(n)}$, $B^{(n)}$ and $C^{(n)}$  at linear order in the perturbations and for the Cosmological GQTGs of curvature orders  $n=3,4$ and $5$.


\subsubsection*{Cubic case}

The equations of motion for the scalar perturbations in the cubic theory $\mathcal{Q}_{(3)}$ read as follows:
\begin{small}
\begin{align}
\frac{\E_{\eta\eta}^{(3)}}{432}&=\Psi_\mathbf{k} \left(\frac{k^4 H'}{a^3}+2 k^2 H^4\right)+4 a^2 H^6 \Phi_\mathbf{k}+6 a H^5 \Psi_\mathbf{k}'\, ,\\
\frac{A^{(3)}}{432}&=-\frac{k^2 \Psi_\mathbf{k} H''}{a^3}-\frac{H \Phi_\mathbf{k}}{a^2} \left(2 a^3 H^4+k^2 H'\right)-\frac{\Psi_\mathbf{k}'}{a^3} \left(2
   a^3 H^4+k^2 H'\right)\, ,\label{eq:Eetai3}\\
\frac{B^{(3)}}{216}&=-\frac{3 H H' \Phi_\mathbf{k}'}{a^2}-\frac{3 H' \Psi_\mathbf{k}''}{a^3}+\frac{\Phi_\mathbf{k}}{a^3} \left(-3 a H H''-3 a
   H'^2-2 a^3 H^4-k^2 H'\right)\notag\\
&+\frac{\Psi_\mathbf{k}'}{a^3} \left(3 a H H'-6 H''\right)+\frac{\Psi_\mathbf{k}}{a^3} \left(3 a H
   H''+8 a^2 H^2 H'+2 a^3 H^4-3 H^{(3)}-2 k^2 H'\right)\, , \label{eq:Eij3}\\
\frac{C^{(3)}}{216}&=\Phi_\mathbf{k}' \left(\frac{k^2 H H'}{a^2}-4 a H^5\right)+\Psi_\mathbf{k}'' \left(\frac{k^2 H'}{a^3}-4 H^4\right)+\Psi_\mathbf{k}' \left(\frac{2 k^2 H''}{a^3}-\frac{k^2 H H'}{a^2}-8 a H^5-16 H^3 H'\right)\notag\\
&+\Phi_\mathbf{k} \left(\frac{k^2 H H''}{a^2}+2 H^4 \left(k^2-12 a H'\right)+\frac{k^2 H' \left(a H'+k^2\right)}{a^3}-12 a^2 H^6\right)\notag\\
&-\frac{\Psi_\mathbf{k}}{a^3} \left(k^2 a H H''+8 k^2 a^2
   H^2 H'+8 a^4 H^4 H'+2 k^2 a^3 H^4+4 a^5 H^6-k^2 H^{(3)}-2 k^4 H'\right)\, .\label{eq:Exx3}
\end{align}
\end{small}
Scalar perturbations of Cosmological GQTGs at cubic order have already been explored in the literature \cite{Cisterna:2018tgx,Cano:2020oaa}. Since, as commented before, there is a unique (up to an innocent global factor) four-dimensional Cosmological GQTG at cubic order, the equations we have just found should exactly coincide with those of \cite{Cisterna:2018tgx,Cano:2020oaa}. This is the case for the equations in Reference \cite{Cano:2020oaa}, but some disagreements with the values presented in Reference \cite{Cisterna:2018tgx} have been encountered\footnote{Note that the equations of \cite{Cisterna:2018tgx} are written in Cartesian coordinates and with the Hubble factor directly in terms of the conformal time, so some massaging needs to be carried out before direct comparison.}. Given the exact matching of our equations with those of \cite{Cano:2020oaa}, we believe the disagreements with \cite{Cisterna:2018tgx} are due to the present of some small typos in the latter Reference. 




\subsubsection*{Quartic case}
In this case, the linearized equations of motion read

\begin{small}
\begin{align}
\frac{\E_{\eta\eta}^{(4)}}{864}&= \Phi_\mathbf{k} \left(72 a^2 H^8-\frac{k^4 H'^2}{a^4}\right)+\Psi_\mathbf{k} \left(\frac{24 k^4 H^2 H'}{a^3}+\frac{7 k^4 H'^2}{a^4}+32 k^2 H^6\right)+96 a H^7 \Psi_\mathbf{k}'\, , \\
\frac{A^{(4)}}{864}&=\frac{k^2 H'^2 \Phi_\mathbf{k}'}{a^4}+\Psi_\mathbf{k}' \left(-\frac{24 k^2 H^2 H'}{a^3}-\frac{7 k^2 H'^2}{a^4}-32
   H^6\right)-\frac{2 k^2
   \Psi_\mathbf{k}}{a^4} \left(3 a H \left(4 H H''+3 H'^2\right)+7 H' H''\right)\notag\\
   &+\frac{2 \Phi_\mathbf{k}}{a^4}\left(k^2 H' H''-a H \left(12 k^2 a H^2 H'+16 a^4 H^6+5 k^2 H'^2\right)\right)\, ,\\
\frac{B^{(4)}}{432}&=\frac{3 H'^2 \Phi_\mathbf{k}''}{a^4}-\frac{3 H' \Psi_\mathbf{k}''}{a^4} \left(24 a H^2+7 H'\right)+\frac{12 H'\Phi_\mathbf{k}'}{a^4} \left(H''-3 a H \left(2 a H^2+H'\right)\right)\notag\\
&-\frac{12 \Psi_\mathbf{k}'}{a^4} \left(-6 a^2 H^3 H'+a H \left(12 H H''+13 H'^2\right)+7 H' H''\right)-\frac{\Phi_\mathbf{k}}{a^4} \Big(6 a H' \left(4 H \left(3 H''+k^2 H\right)+5 H'^2\right)\notag\\
&\left.+6 a^2
   H^2 \left(12 H H''+31 H'^2\right)+32 a^4 H^6-6 H''^2+7 k^2 H'^2-6 H^{(3)} H'\right)-\frac{\Psi_\mathbf{k}}{a^4} \left(-192 a^3 H^4
   H'\right.\notag\\
   &\left.-18 a^2 H^2 \left(4 H H''+3 H'^2\right)+6 a \left(9 H'^3+4 H^2 \left(3 H^{(3)}+2 k^2 H'\right)+28 H H' H''\right)\right.\notag\\
   &\left.-32 a^4 H^6+42 H''^2-9 k^2 H'^2+42 H^{(3)} H'\right)\, ,\\ \notag
\frac{C^{(4)}}{432}&=\frac{4 \Phi_\mathbf{k}'}{a^4} \left(6 k^2 a^2 H^3 H'+3 k^2 a H H'^2-16 a^5 H^7-k^2 H' H''\right)-\frac{k^2 H'^2 \Phi_\mathbf{k}''}{a^4}+k^2 \Psi_\mathbf{k}'' \left(\frac{24  H^2 H'}{a^3}+\frac{7  H'^2}{a^4}\right)\\\notag &-64H^6 \Psi_\mathbf{k}''
   +\frac{4 \Psi_\mathbf{k}'}{a^4} \left(a H \left(13 k^2 H'^2-6 a H^2 H' \left(16 a^2 H^2+k^2\right)-32 a^4 H^6\right)+k^2 H'' \left(12 a H^2+7
   H'\right)\right)\\\notag &+\Phi_\mathbf{k} \left(-512 a H^6 H'+\frac{2 k^2 H^2 \left(12 H H''+31 H'^2\right)}{a^2}+\frac{2 k^2 H' \left(12
   H \left(H''+k^2 H\right)+5 H'^2\right)}{a^3}\right) \\& \notag+\Phi_{\mathbf{k}} \left (\frac{k^2 \left(-2 H''^2+7 k^2 H'^2-2 H^{(3)} H'\right)}{a^4}-192 a^2 H^8+32 k^2
   H^6\right)-\Psi_\mathbf{k} H^4 H' \left(\frac{192 k^2}{a}+128 a H^2  \right) \\ \notag &+ \Psi_{\mathbf{k}} \left (-\frac{6 k^2 H^2 \left(4 H H''+3 H'^2\right)}{a^2}+\frac{2 k^2
   \left(12 H^2 H^{(3)}+9 H'^3+4 H H' \left(7 H''+6 k^2 H\right)\right)}{a^3} \right)\\&  +\Psi_{\mathbf{k}} \left (\frac{k^2 \left(14 H''^2-9 k^2 H'^2+14 H^{(3)}
   H'\right)}{a^4}-48 a^2 H^8-32 k^2 H^6\right)\,.
\end{align}
\end{small}

If we compare these equations to those of Reference \cite{Cano:2020oaa}, we observe that they are not the same. But this is of course to be expected, since our theory and the one considered by that Reference are not the same, see Equation \eqref{eq:pabloynuestra}. Nonetheless, we have been able to check that on adding the quartic trivial GQTG introduced in Equation \eqref{eq:trivialnuestra}, we recover exactly (up to an innocent global factor) the equations of Reference \cite{Cano:2020oaa}.

\section*{Quintic case}

Finally, we present the perturbations for the $n=5$ order. The expression of the associated equations of motion are given by
\begin{small}
\begin{align}
\notag
\frac{\E_{\eta\eta}^{(5)}}{1728}&=\frac{\Phi_\mathbf{k}}{a^5} \left(-30 k^4 a H^2 H'^2+864 a^7 H^{10}-17 k^4 H'^3\right)+\Psi_\mathbf{k} \left(\frac{360 k^4 H^4
   H'}{a^3}+\frac{210 k^4 H^2 H'^2}{a^4}+\frac{37 k^4 H'^3}{a^5} \right) \\& +360 k^2 H^8\Psi_{\mathbf{k}}+1080 a H^9 \Psi_\mathbf{k}'\, , \\ \notag
\frac{A^{(5)}}{1728}&=\Phi_\mathbf{k}' \left(\frac{17 k^2 H'^3}{a^5}+\frac{30 k^2 H^2 H'^2}{a^4}\right)+\Psi_\mathbf{k}' \left(-\frac{360 k^2 H^4
   H'}{a^3}-\frac{210 k^2 H^2 H'^2}{a^4}-\frac{37 k^2 H'^3}{a^5}-360 H^8\right)\\ \notag& +\frac{3 \Phi_\mathbf{k}}{a^5} \left(k^2 H' H'' \left(20 a
   H^2+17 H'\right)-5 a H \left(24 k^2 a^2 H^4 H'+20 k^2 a H^2 H'^2+24 a^5 H^8+3 k^2 H'^3\right)\right)\\&-\frac{3 k^2 \Psi_\mathbf{k}}{a^5} \left(60 a^2 H^3 \left(2 H H''+3 H'^2\right)+5 a H H' \left(28 H H''+13 H'^2\right)+37 H'^2 H''\right)\, ,\\ \notag
\frac{B^{(5)}}{864}&=\frac{3 H'^2 \Phi_\mathbf{k}''}{a^5} \left(30 a H^2+17 H'\right)+\frac{\Psi_\mathbf{k}''}{a^5} \left(-1080 a^2 H^4 H'-630 a
   H^2 H'^2-111 H'^3\right) \\\notag& -\frac{18 H' \Phi_\mathbf{k}'}{a^5} \left(6 a H \left(10 a H^2 H'+10 a^2 H^4+H'^2\right)-H'' \left(20
   a H^2+17 H'\right)\right) \\ \notag &  +\frac{18 \Psi_\mathbf{k}'}{a^5} \left(60 a^3 H^5 H'-20 a^2 H^3 \left(6 H H''+13 H'^2\right)-28 a H
   H' \left(5 H H''+3 H'^2\right)-37 H'^2 H''\right)\\\notag &  +\frac{\Phi_\mathbf{k}}{a^5} \left(-90 a^2 H^2 H' \left(4 H \left(6 H''+k^2
   H\right)+27 H'^2\right)-180 a^3 H^4 \left(6 H H''+25 H'^2\right) \right)\\\notag &  +\frac{\Phi_\mathbf{k}}{a^5}\left (-3 a \left(45 H'^4+168 H H'^2 H''+10 H^2 \left(-6 H''^2+7 k^2
   H'^2-6 H^{(3)} H'\right)\right)-360 a^5 H^8 \right) \\\notag &  +\frac{\Phi_{\mathbf{k}}}{a^5} \left(H'\left(306 H''^2-37 k^2 H'^2+153 H^{(3)} H'\right)\right)+\frac{\Psi_\mathbf{k}}{a^5} 
   \left(2880 a^4 H^6 H'+540 a^3 H^4 \left(2 H H''+3 H'^2\right)\right)\\\notag & +\frac{\Psi_{\mathbf{k}}}{a^5}\left(-90 a^2 H^2 \left(12 H^2 H^{(3)}+41 H'^3+8 H H' \left(7 H''+k^2
   H\right)\right)+9 a \left(-65 H'^4-364 H H'^2 H''\right)\right) \\\notag & +\frac{\Psi_{\mathbf{k}}}{a^5} \left( 90a H^2 \left(3 k^2 H'^2-14 H''^2-14 H^{(3)} H'\right)+H'
   \left(-666 H''^2+77 k^2 H'^2-333 H^{(3)} H'\right)\right) \\& +360  H^8 \Psi_{\mathbf{k}}\, ,\\ \notag
\frac{C^{(5)}}{864}&=-\frac{k^2 H'^2 \Phi_\mathbf{k}''}{a^5} \left(30 a H^2+17 H'\right)+\Psi_\mathbf{k}'' \left(\frac{360 k^2 H^4
   H'}{a^3}+\frac{210 k^2 H^2 H'^2}{a^4}+\frac{37 k^2 H'^3}{a^5}-720 H^8\right)\\& \notag +\frac{6 \Phi_\mathbf{k}'}{a^5} \left(6 a H \left(10 k^2 a^2
   H^4 H'+10 k^2 a H^2 H'^2-20 a^5 H^8+k^2 H'^3\right)-k^2 H' H'' \left(20 a H^2+17 H'\right)\right)\\ \notag &-\frac{6 \Psi_\mathbf{k}'}{a^5} \left(4 a H \left(15 a^2 H^4 H' \left(16 a^2 H^2+k^2\right)-65 k^2 a H^2 H'^2+60 a^5 H^8-21 k^2 H'^3\right)\right)\\\notag &+\frac{ 6 k^2 H''\Psi'_{\mathbf{k}}}{a^5}
   \left(140 a H^2 H'+120 a^2 H^4+37 H'^2\right)-60H^4\Phi_\mathbf{k} \left(120 a H^4 H'-\frac{k^2 \left(6 H H''+25   H'^2\right)}{a^2} \right)\\ \notag &+\Phi_\mathbf{k} \left (\frac{90 k^2 H^2 H' \left(8 H H''+9 H'^2+4 k^2 H^2\right)}{a^3}+\frac{k^2 H' \left(-102 H''^2+37 k^2 H'^2-51
   H^{(3)} H'\right)}{a^5}\right) \\ \notag & +\Phi_\mathbf{k} \left( \frac{3 k^2 \left(15 H'^4+56 H H'^2 H''+10 H^2 \left(7 k^2 H'^2-2 H''^2-2 H^{(3)}
   H'\right)\right)}{a^4}-2160 a^2 H^{10}+360 k^2 H^8\right)\\ \notag & +\Psi_\mathbf{k} \left(\frac{30 k^2 H^2 \left(12 H^2 H^{(3)}+41 H'^3+8 H H' \left(7 H''+3 k^2
   H\right)\right)}{a^3}-\frac{2880 k^2 H^6 H'}{a}-1440 a H^8 H' \right) \\ \notag &+ \Psi_{\mathbf{k}} \left (-\frac{180 k^2
   H^4 \left(2 H H''+3 H'^2\right)}{a^2}+\frac{k^2 H' \left(222 H''^2-77 k^2 H'^2+111 H^{(3)} H'\right)}{a^5}\right) \\ \notag & +\Psi_{\mathbf{k}}\left (\frac{3 k^2 \left(65 H'^4+364 H H'^2 H''+10
   H^2 \left(14 H''^2-9 k^2 H'^2+14 H^{(3)} H'\right)\right)}{a^4}-432 a^2 H^{10}\right)\\& -360 k^2 H^8 \Psi_{\mathbf{k}}\, .
\end{align}
\end{small}
We would like to emphasize that these represent the first computations ever of cosmological perturbations in quintic Cosmological GQTGs. 

\section{Conclusions}

\label{sec:conclu}

In conclusion, in this work we have carried out an exhaustive study of higher-curvature gravities with second-order equations of motion on FLRW backgrounds in generic dimensions $D \geq 3$. First, after examining the properties of curvature invariants when evaluated on FLRW backgrounds, we were able to derive Proposition \ref{prop:cosmocond}, which provides an extremely simple requirement to check whether a given higher-curvature gravity is of the cosmological type in generic $D$. This condition depends solely on the properties of the theory when evaluated on FLRW ans\"atze. Next, we obtained the unique expression, in terms of the Ricci scalar and the unique independent component of the traceless Ricci tensor on FLRW configurations, which satisfies the aforementioned condition at each curvature order and dimension $D$. With this result, we obtained an instance of Cosmological Gravity at each curvature order and spacetime dimension $D$. Since one is free to add and remove terms that vanish on top of FLRW backgrounds (such as any curvature invariant containing at least one Weyl curvature tensor), our result completes the classification of all inequivalent Cosmological Gravities, defined as those equivalence classes arising after stating that two Cosmological Gravities are equivalent if they differ by a piece that vanishes for FLRW backgrounds. In the literature, examples of Cosmological Gravities had been built at all curvature orders in $D=3$ and a recursive relation was derived to obtain theories (with covariant derivatives of the curvature) of the cosmological type at all orders and dimensions (although studied from the holographic $c$-theorem perspective), so our results complete the study of Cosmological Gravities without covariant derivatives of the curvatures for all orders and dimensions, providing explicit and relatively simple examples of such theories. In the case of $D=4$, we presented some physical applications of our Cosmological Gravities and noted that they naturally feature the phenomenon of geometric inflation \cite{Cisterna:2018tgx,Arciniega:2018tnn}, by which the Universe undergoes an inflationary era --- with the appropriate number of e-folds --- induced by the presence of higher-curvature terms, without the introduction of additional matter fields.





Next, we studied the compatibility between the cosmological condition and the requirement of being a GQTG. We considered $D \geq 4$ and we observed that not every Cosmological Gravity belongs to the GQTG class. Nevertheless, we were able to prove two remarkable results. On the one hand, for dimensions $D \geq 5$, we showed that Quasitopological Gravities (i.e., GQTGs with algebraic equations of motion for static and spherically symmetric configurations) can always be transformed into a Cosmological Quasitopological Gravity by adding a prescribed combination of curvature invariants that vanishes on top of static and spherically symmetric backgrounds. We proved this by showing an example of such a theory at all curvature orders and dimensions $D \geq 5$. On the other hand, for $D=4$, we presented an instance of a Cosmological GQTG at all curvature orders, thus completing previous works in the literature which found examples of Cosmological GQTGs up to eighth order in the curvature. Observe that all the aforementioned Cosmological Gravities (not necessarily fulfilling the GQTG condition) will also satisfy a holographic $c$-theorem by construction \cite{Bueno:2022log}.

Finally, we studied the properties of cosmological perturbations in Cosmological Gravities. Remarkably, we showed that every Cosmological Gravity for any dimension $D\geq 3$ is such that the subsequent linearized equations of motion for scalar cosmological perturbations possess at most two time derivatives (although they may contain spatial derivatives of higher order). Then we restricted our attention to the four-dimensional Cosmological GQGTs we had previously derived, provided the necessary ingredients to compute the linear equations of motion for scalar cosmological perturbations and presented these equations for theories up to fifth order in the curvature. 

In the future, there are several intriguing avenues to be explored. For example, it would be interesting to further characterize the class of Cosmological Gravities in generic dimensions. Is it possible to obtain full classification results of these theories at all curvature orders, beyond inequivalent Cosmological Gravities? Similarly, one could ask the same question in the context of Cosmological GQTGs. More specifically, in the same way it is possible to transform Quasitopological Gravities in $D \geq 5$ and at any curvature order into Cosmological Quasitopological Gravities by adding combinations of curvature invariants that vanish on static and spherically symmetric configurations, could we upgrade proper GQTGs in $D \geq 5$ into Cosmological GQTGs\footnote{We were able to check this for a quartic five-dimensional proper GQTG, see Equation \eqref{eq:propgqg5}. In another vein, in the four-dimensional case we showed how to convert proper GQTGs at any curvature order into Cosmological GQTGs.}? On the other hand, in the four-dimensional case, it is tantalizing to find the expression for all theories that fulfill both the definitions of Cosmological Gravity and GQTG. In particular, one may wonder about the properties of these theories beyond FLRW and static and spherically symmetric configurations. For instance, it can be checked that the unique four-dimensional cubic Cosmological GQTG also possesses Taub-NUT solutions whose equations of motion are of second order in derivatives \cite{Bueno:2018uoy}, so it would be interesting to investigate whether this trend continues to all curvature orders or whether there are further constraints to be imposed.


On a different front, another clear future direction corresponds to the study of vector and tensor perturbations in the context of Cosmological Gravities. As mentioned in Section \ref{sec:cosmopert}, while scalar perturbations are guaranteed to have equations of motion with no more than two time derivatives, this is no longer the case for tensor perturbations in Cosmological Gravities. However, there are specific Cosmological Gravities in which cosmological tensor perturbations do obey equations of motion with no more than two temporal derivatives \cite{Cano:2020oaa}, so it is intriguing to examine which further conditions need to be required so that cosmological tensor (and vector) perturbations possess equations of second order in time derivatives. 

Finally, there are two canonical extensions of our Cosmological Gravities whose examination would be interesting in the future. First, one could investigate the addition of (non-minimally coupled) matter to Cosmological Gravities, such as a scalar field or a  $\mathrm{U}(1)$ gauge field. Regarding this last point, it is known that the definition of GQTGs may be canonically extended to allow for non-minimal couplings to a $\mathrm{U}(1)$ vector \cite{Cano:2020ezi,Cano:2020qhy,Bueno:2021krl,Cano:2022ord,Bueno:2022ewf}, so it makes sense to explore whether a similar situation may occur for Cosmological Gravities, Also, the inclusion of this type of matter has mainly been considered in the context of minimally coupled non-linear electrodynamics \cite{Garcia-Salcedo:2000ujn,PhysRevD.65.063501,Novello:2008ra}, so this aspect deserves careful scrutiny. Second, a natural question would be the addition of terms with covariant derivatives of the curvature, expected to appear from an effective field theory perspective.  In particular, after the recent discovery of GQTGs with covariant derivatives \cite{Aguilar-Gutierrez:2023kfn}, it is tantalizing to investigate whether there exist Cosmological Gravities featuring covariant derivatives of the curvature which also satisfy the GQTG condition. We expect to address these directions in the near future.


\section*{Acknowledgments}
We thank Daniele Bertacca, Pablo Bueno, Pablo A. Cano, Nicolás Grandi and Julio Oliva for useful discussions. The research of J. M. has been supported by Israel Science Foundation, grant no. 1487/21 and by FONDECYT Postdoctorado Grant 3230626.  The work of Á. J. M. has been supported by a postdoctoral grant from the Istituto Nazionale di Fisica Nucleare, Bando 23590. Á. J. M. also wishes to thank E. Gil and A. Murcia for their permanent support.

\appendix



\section{Cosmological Quasitopological Gravities in $D \geq 5$ at orders $n=5$ and $6$}\label{App:A}

We present here examples of Cosmological Quasitopological Gravities of order 5 and 6 in the curvature in $D \geq 5$. They have been obtained by setting $n=5$ and $6$ in Equation \eqref{eq:quasicosmo}.

\begin{align}
\nonumber
\mathsf{Q}_{(5)}&=R^5+\frac{4 (D-1)^3 D^4 (4 D-5) W_{ghij} W^{ghij}
   W\indices{_a_b^c^d}W\indices{_c_d^e^f}W\indices{_e_f^a^b}}{(D-3)^2 (D-2)^3 (D ((D-9)
   D+26)-22)}\\\nonumber &+\frac{48 (D-1)^3 D^4 \left(16 (D-3)
   (D-1) Z_{a}^b  Z_{b}^{c} Z_{c}^d Z_{d}^{f} Z_{f}^a-5 (D-2)^2 W_{ghij} W^{ghij} Z^a_b W_{a c d e}W^{bcde}\right)}{(D-4)
   (D-3)^2 (D-2)^6}\\\nonumber &-\frac{640 (D-1)^4 D^4 Z_{de} Z^{de}
   Z^a_b Z^b_cZ_a^c}{(D-4) (D-3) (D-2)^6}+\frac{15 (D-1)^2
   D^3 (3 D-4) R \left (W_{abcd} W^{abcd} \right)^2}{(D-3)^2
   (D-2)^4}\\\nonumber &-\frac{1920 (D-1)^3 D^3 R Z^{a}_b Z_{a c} Z_{d e} W^{bdce}}{(D-4)
   (D-3) (D-2)^4}+\frac{240 (D-1)^3 D^3 R
   \left ( Z_{ab} Z^{ab} \right)^2}{(D-3) (D-2)^5}\\\nonumber &-\frac{480 (D-1)^3 D^3 R
  Z^a_bZ^b_cZ^c_dZ^d_a}{(D-3) (D-2)^5}-\frac{160 (D-1)^3 D^3
   (D (11 D-12)+4) W_{ghij} W^{ghij} Z^a_b Z^b_cZ_a^c}{(D-4) (D-3)
   (D-2)^6}\\\nonumber & +\frac{40 (D-1)^3 D^3 (D (17
   D-28)+12) W\indices{_a_b^c^d}W\indices{_c_d^e^f}W\indices{_e_f^a^b} Z_{gh} Z^{gh}}{(D-3) (D-2)^4 (D
   ((D-9) D+26)-22)}\\\nonumber & +\frac{1920 (D-1)^4 D^3 Z_{ef} Z^{ef}
   Z^a_b Z^c_d W\indices{_a_c^b^d}}{(D-3) (D-2)^6} +\frac{20 (D-1)^2 D^2 (2
   D-3) R^2 W\indices{_a_b^c^d}W\indices{_c_d^e^f}W\indices{_e_f^a^b}}{(D-3) (D-2) (D ((D-9)
   D+26)-22)}\\\nonumber & +\frac{240 (D-1)^2 D^2 R^2
   Z^a_b Z^c_d W\indices{_a_c^b^d}}{(D-3) (D-2)^3}+\frac{160 (D-1)^2 D^2 R^2
   Z^a_b Z^b_cZ_a^c}{(D-2)^4}\\\nonumber & -\frac{960 (D-1)^3 D^2 R Z^{ab} W_{acbd} W^{c efg} W^d{}_{efg}}{(D-4)
   \left(D^2-5 D+6\right)^2} -\frac{240 (D-1)^2 D^2 R^2
   Z^a_b W_{a c d e}W^{bcde}}{(D-4) (D-3) (D-2)^2}\\\nonumber & +\frac{120 (D-1)^2
   D^2 (D (7 D-10)+4) R W_{abcd} W^{abcd} Z_{ef}Z^{ef}}{(D-3)
   (D-2)^5}+\frac{10 (D-1) D R^3
   W_{abcd} W^{abcd}}{(D-3) (D-2)}\\&-\frac{40 (D-1) D R^3
   Z_{ab} Z^{ab}}{(D-2)^2}\,,\\
   \nonumber
\mathsf{Q}_{(6)}&=R^6+\frac{15 (D-1) D W_{abcd} W^{abcd} R^4}{(D-3) (D-2)}-\frac{60 (D-1) D Z_{ab} Z^{ab} R^4}{(D-2)^2}\\\nonumber &+\frac{40 (D-1)^2 D^2 (2 D-3)
   W\indices{_a_b^c^d}W\indices{_c_d^e^f}W\indices{_e_f^a^b} R^3}{(D-3) (D-2) (D ((D-9) D+26)-22)}+\frac{480 (D-1)^2 D^2 Z^a_b Z^c_d W\indices{_a_c^b^d} R^3}{(D-3) (D-2)^3}\\\nonumber & +\frac{320 (D-1)^2
   D^2 Z^a_b Z^b_cZ_a^c R^3}{(D-2)^4}-\frac{480 (D-1)^2 D^2 Z^a_b W_{a c d e}W^{bcde} R^3}{(D-4) (D-3) (D-2)^2}\\\nonumber &+\frac{45 (D-1)^2 D^3 (3 D-4)
   \left(W_{a b c d} W^{a b c d}\right)^2 R^2}{(D-3)^2 (D-2)^4}+\frac{720 (D-1)^3 D^3 \left(Z^a_b Z_a^b\right)^2 R^2}{(D-3) (D-2)^5}\\\nonumber &-\frac{5760 (D-1)^3 D^3 Z^{a}_b Z_{a c} Z_{d e} W^{bdce}
   R^2}{(D-4) (D-3) (D-2)^4}+\frac{360 (D-1)^2 D^2 (D (7 D-10)+4) W_{abcd} W^{abcd} Z_{ef} Z^{ef} R^2}{(D-3) (D-2)^5}\\\nonumber &-\frac{1440 (D-1)^3
   D^3 Z^a_bZ^b_cZ^c_dZ^d_a R^2}{(D-3) (D-2)^5}-\frac{2880 (D-1)^3 D^2 Z^{ab} W_{acbd} W^{c efg} W^d{}_{efg} R^2}{(D-4) \left(D^2-5 D+6\right)^2}\\\nonumber &+\frac{24 (D-1)^3
   D^4 (4 D-5) W_{a b c d} W^{a b c d}W\indices{_e_f^g^h}W\indices{_g_h^i^j}W\indices{_i_j^e^f} R}{(D-3)^2 (D-2)^3 (D ((D-9) D+26)-22)}\\\nonumber &+\frac{240 (D-1)^3 D^3 (D (17 D-28)+12)
   W\indices{_a_b^c^d}W\indices{_c_d^e^f}W\indices{_e_f^a^b}Z^g_hZ^h_g R}{(D-3) (D-2)^4 (D ((D-9) D+26)-22)}\\\nonumber &+\frac{11520 (D-1)^4 D^3 Z^a_b Z_a^bZ^c_d Z^e_f W\indices{_c_e^d^f} R}{(D-3) (D-2)^6}-\frac{960
   (D-1)^3 D^3 (D (11 D-12)+4) W_{a b c d} W^{a b c d}Z^e_f Z^f_gZ_e^g R}{(D-4) (D-3) (D-2)^6}\\\nonumber &-\frac{3840 (D-1)^4 D^4 Z_{ab} Z^{ab} Z_{ef} Z^{fg} Z_{g}{}^e
   R}{(D-4) (D-3) (D-2)^6}+\frac{4608 (D-1)^4 D^4 Z_{ab} Z^{bc} Z_{cd} Z^{de} Z_{e}{}^a R}{(D-4) (D-3) (D-2)^6}\\\nonumber &-\frac{1440 (D-1)^3 D^4  W_{a b c d} W^{a b c d}Z^e_f W_{e g h i}W^{fghi} R}{(D-4) (D-3)^2 (D-2)^4}+\frac{5 (D-1)^3 D^5 (5 D-6) \left ( W_{abcd} W^{abcd}\right)^3}{(D-3)^3 (D-2)^6}\\\nonumber &+\frac{320 (D-1)^4 D^3
   (D (9 D-26)+24) \left ( Z_{ab} Z^{ab} \right)^3}{(D-3) (D-2)^8}-\frac{15360 (D-1)^5 D^4
   Z^{a}_b Z_{a c} Z_{d e} W^{bdce} Z_{fg} Z^{fg}}{(D-4) (D-3) (D-2)^7}\\\nonumber & +\frac{960 (D-1)^4 D^3 (D (8 D-7)+2) W_{abcd} W^{abcd} \left ( Z_{ef} Z^{ef} \right)^2}{(D-3)
   (D-2)^8}\\\nonumber & +\frac{60 (D-1)^3 D^4 \left(31 D^2-54 D+24\right) \left(W_{a b c d} W^{a b c d}\right)^2 Z_{ef} Z^{ef}}{(D-3)^2 (D-2)^7}\\\nonumber &-\frac{320 (D-1)^4 D^4 (D (23 D-32)+12) W\indices{_a_b^c^d}W\indices{_c_d^e^f}W\indices{_e_f^a^b}Z^g_h Z^h_iZ_g^i}{(D-4) (D-3) (D-2)^5
   (D ((D-9) D+26)-22)}\\\nonumber &-\frac{640 (D-1)^4 D^4 (5 D-6) Z_{ef} Z^{ef} Z^a_bZ^b_cZ^c_dZ^d_a}{(D-3) (D-2)^8}\\&-\frac{960 (D-1)^4 D^4 W_{hijk} W^{hijk}
   Z^{ab} W_{acbd} W^{c efg} W^d{}_{efg}}{(D-4) (D-3)^3 (D-2)^4}\,.
\end{align}

\section{Four-dimensional Cosmological Generalized Quasitopological Gravities at orders $n=6,7,8$ and $9$}\label{App:B}

We devote this appendix to the presentation of examples of four-dimensional Cosmological Generalized Quasitopological Gravities from the sixth up to the ninth curvature order, thus extending previous results in the literature \cite{Arciniega:2018fxj,Cisterna:2018tgx,Arciniega:2018tnn}. These theories have been derived by setting $n=6$, $7$, $8$ and $9$ in Equation \eqref{eq:cosmogqgteo}. 
\begin{align}
\mathcal{Q}_{(6)}&=R^6+90 R^4 W_{a b c d} W^{a b c d}-180 R^4 Z^a_b Z_a^b-2520 R^3 W\indices{_a_b^c^d}W\indices{_c_d^e^f}W\indices{_e_f^a^b}+2160 R^3 Z^a_b Z^c_d W\indices{_a_c^b^d}\notag\\
&+2880 R^3 Z^a_b Z^b_cZ_a^c+2025 R^2 \left(W_{a b c d} W^{a b c d}\right)^2-3240 R^2 W_{a b c d} W^{a b c d}Z^e_f Z_e^f+4860 R^2
   \left(Z^a_b Z_a^b\right)^2\notag\\
   &-19440 R^2Z^a_bZ^b_cZ^c_dZ^d_a-8424 R W_{a b c d} W^{a b c d}W\indices{_e_f^g^h}W\indices{_g_h^i^j}W\indices{_i_j^e^f}\notag\\
   &+12960 R W\indices{_a_b^c^d}W\indices{_c_d^e^f}W\indices{_e_f^a^b}Z^g_hZ^h_g+20736 R Z^a_b Z_a^bZ^c_d Z^d_eZ_c^e+1080 \left(W_{a b c d} W^{a b c d}\right)^3\notag\\
   &-1620 \left(W_{a b c d} W^{a b c d}\right)^2Z^e_f Z_e^f+6480
   \left(Z^a_b Z_a^b\right)^3-25920 Z^a_b Z_a^b Z^c_d Z^d_e Z^e_f Z^f_c\, , \\   \nonumber
   \mathcal{Q}_{(7)}&=R^7+126 R^5 W_{abcd} W^{abcd}-252 R^5 Z_{ab} Z^{ab} -4410 R^4  W\indices{_a_b^c^d}W\indices{_c_d^e^f}W\indices{_e_f^a^b}\\\nonumber &+3780 R^4  Z^a_b Z^c_d W\indices{_a_c^b^d}+5040 R^4
   Z_{ab} Z^{ac} Z^{b}{}_c+4725 R^3 \left ( W_{abcd} W^{abcd}\right)^2\\ \nonumber &-7560 R^3 W_{abcd} W^{abcd} Z_{ef}Z^{ef}+11340 R^3
   \left ( Z_{ab} Z^{ab} \right)^2-45360 R^3 Z^a_bZ^b_cZ^c_dZ^d_a\\ \nonumber &-29484 R^2 W_{ghij} W^{ghij}  W\indices{_a_b^c^d}W\indices{_c_d^e^f}W\indices{_e_f^a^b}+45360 R^2  W\indices{_a_b^c^d}W\indices{_c_d^e^f}W\indices{_e_f^a^b}
   Z_{gh} Z^{gh}\\ \nonumber &+72576 R^2 Z_{ab} Z^{ab} Z^{cd} Z_{de} Z^{e}{}_c+7560 R \left (W_{abcd} W^{abcd} \right) ^3-11340 R \left ( W_{abcd} W^{abcd} \right) ^2
   Z_{ef}Z^{ef}\\ \nonumber &+45360 R \left ( Z_{ab} Z^{ab} \right) ^3-181440 R Z_{ef} Z^{ef} Z^a_bZ^b_cZ^c_dZ^d_a+62208 \left ( Z_{de} Z^{de} \right) ^2 Z^a_b Z^b_cZ_a^c\\&-9234 \left ( W_{ghij} W^{ghij} \right) ^2
    W\indices{_a_b^c^d}W\indices{_c_d^e^f}W\indices{_e_f^a^b}+13608 W_{ijkl} W^{ijkl}  W\indices{_a_b^c^d}W\indices{_c_d^e^f}W\indices{_e_f^a^b} Z_{gh} Z^{gh}\,,  \\ \nonumber 
\mathcal{Q}_{(8)}&=R^8+168 R^6 W_{abcd} W^{abcd}-336 R^6 Z_{a}^{b} Z^{a}_{b}-7056 R^5 W\indices{_a_b^c^d}W\indices{_c_d^e^f}W\indices{_e_f^a^b}+6048 R^5 Z^a_b Z^c_d W\indices{_a_c^b^d}\\ \nonumber &+8064 R^5
   Z^a_b Z^b_cZ_a^c+9450 R^4 \left (W_{abcd} W^{abcd} \right)^2-15120 R^4 W_{abcd} W^{abcd} Z_{ef} Z^{ef}\\ \nonumber & +22680 R^4
   \left(Z^a_b Z_a^b\right)^2-90720 R^4 Z^a_bZ^b_cZ^c_dZ^d_a-78624 R^3 W_{ghij} W^{ghij} W\indices{_a_b^c^d}W\indices{_c_d^e^f}W\indices{_e_f^a^b}\\ \nonumber &+120960 R^3 W\indices{_a_b^c^d}W\indices{_c_d^e^f}W\indices{_e_f^a^b}
   Z_{gh} Z^{gh}+193536 R^3 Z_{de} Z^{de} Z^a_b Z^b_cZ_a^c\\ \nonumber &+30240 R^2 \left ( W_{cdef} W^{cdef} \right)^3-45360 R^2 \left ( W_{cdef} W^{cdef} \right) ^2
   Z_{ab} Z^{ab}+181440 R^2 \left ( Z_{ab} Z^{ab} \right)^3\\ \nonumber &-725760 R^2 Z_{ef} Z^{ef} Z^a_bZ^b_cZ^c_dZ^d_a-73872 R \left ( W_{ghij} W^{ghij} \right) ^2
   W\indices{_a_b^c^d}W\indices{_c_d^e^f}W\indices{_e_f^a^b}\\  \nonumber &+108864 R W_{ijkl}W^{ijkl} W\indices{_a_b^c^d}W\indices{_c_d^e^f}W\indices{_e_f^a^b} Z_{gh} Z^{gh}+497664 R \left(Z^e_f Z_e^f\right)^2 Z^a_b Z^b_cZ_a^c\\ \nonumber &+6237
   \left ( W_{abcd} W^{abcd} \right )^4-9072\left ( W_{cdef} W^{cdef} \right)^3 Z_{ab} Z^{ab}+108864 \left ( Z_{ab} Z^{ab} \right) ^4\\ &-435456 \left(Z^e_f Z_e^f\right)^2
   Z^a_bZ^b_cZ^c_dZ^d_a\,, \\ \nonumber
\mathcal{Q}_{(9)}&=R^9 +216 R^7 W_{abcd} W^{abcd}-432 R^7 Z_{ab} Z^{ab}-10584 R^6 W\indices{_a_b^c^d}W\indices{_c_d^e^f}W\indices{_e_f^a^b}\\ \nonumber & +9072 R^6 Z^a_b Z^c_d W\indices{_a_c^b^d}+12096
   R^6 Z^a_b Z^b_cZ_a^c+17010 R^5 \left( W_{abcd} W^{abcd} \right)^2\\ \nonumber & -27216 R^5 W_{abcd} W^{abcd} Z_{ef} Z^{ef}+40824 R^5
   \left(Z^a_b Z_a^b\right)^2-163296 R^5 Z^a_bZ^b_cZ^c_dZ^d_a\\\nonumber & -176904 R^4 W_{ghij} W^{ghij} W\indices{_a_b^c^d}W\indices{_c_d^e^f}W\indices{_e_f^a^b}+272160 R^4 W\indices{_a_b^c^d}W\indices{_c_d^e^f}W\indices{_e_f^a^b}
   Z_{gh} Z^{gh}\\\nonumber & +435456 R^4 Z_{de} Z^{de} Z^a_b Z^b_cZ_a^c+90720 R^3 \left ( W_{abcd} W^{abcd} \right )^3\\\nonumber & -136080 R^3 \left ( W_{abcd} W^{abcd} \right )^2
   Z_{ef} Z^{ef}+544320 R^3 \left ( Z_{ab} Z^{ab} \right )^3\\\nonumber & -2177280 R^3 Z_{ef} Z^{ef} Z^a_bZ^b_cZ^c_dZ^d_a-332424 R^2
   \left ( W_{ghij} W^{ghij} \right )^2 W\indices{_a_b^c^d}W\indices{_c_d^e^f}W\indices{_e_f^a^b}\\\nonumber & +489888 R^2 W_{ghij} W^{ghij} W\indices{_a_b^c^d}W\indices{_c_d^e^f}W\indices{_e_f^a^b} Z_{kl} Z^{kl}+2239488 R^2
   \left(Z^e_f Z_e^f\right)^2 Z^a_b Z^b_cZ_a^c\\\nonumber & +56133 R \left ( W_{abcd} W^{abcd} \right )^4-81648 R \left ( W_{abcd} W^{abcd} \right )^3 Z_{ef} Z^{ef}+979776 R
   \left ( Z_{ab} Z^{ab} \right) ^4\\\nonumber & -3919104 R \left(Z^e_f Z_e^f\right)^2 Z^a_bZ^b_cZ^c_dZ^d_a-48600 \left ( W_{ghij} W^{ghij} \right )^3 W\indices{_a_b^c^d}W\indices{_c_d^e^f}W\indices{_e_f^a^b}\\& +69984
   \left ( W_{ghij} W^{ghij} \right )^2 W\indices{_a_b^c^d}W\indices{_c_d^e^f}W\indices{_e_f^a^b} Z_{kl} Z^{kl}+995328 \left ( Z_{de} Z^{de} \right) ^3 Z^a_b Z^b_cZ_a^c   \,.
\end{align}

\bibliographystyle{JHEP-2}
\bibliography{Gravities.bib}

\end{document}